\providecommand{\keywords}[1]{\textbf{\textit{Key words: }} #1}
\def\ZZ{\mathbb{Z}}
\def\RR{\mathbb{R}}
\def\cal{\mathcal}
\def\bf{\mathbf}
\def\calL{\mathcal{L}}
\def\TrapGen{\mathsf{TrapGen}}
\def\ExtBasis{\mathsf{ExtBasis}}
\def\Pr{\mathrm{Pr}}
\def\Adv{\mathsf{Adv}}
\def\u{\bf{u}}
\def\e{\bf{e}}
\def\x{\bf{x}}
\def\L{\Lambda}
\def\Lp{\Lambda^{\perp}}
\def\b{\bf{b}}
\def\s{\bf{s}}
\def\DuA{\cal{D}_{\L_q^{\u}(A),s}}
\tikzset{
  treenode/.style = {align=center, inner sep=0pt, text centered,
    font=\sffamily},
  arn_n/.style = {treenode, circle, white, font=\sffamily\bfseries, draw=black,
    fill=black, text width=1.5em},
  arn_r/.style = {treenode, circle, red, draw=red, 
    text width=1.5em, very thick},
  arn_x/.style = {treenode, rectangle, draw=black,
    minimum width=0.5em, minimum height=0.5em}
}
\begin{document}
	\title{Lattice Blind Signatures with Forward Security }

	\authorrunning{H. Q. Le et al.}
	\author{Huy Quoc Le\inst{1,4}$^\textrm{(\Letter)}$ \and Dung Hoang Duong\inst{1}$^\textrm{(\Letter)}$\and Willy Susilo\inst{1}  \and Ha Thanh Nguyen Tran\inst{2} \and Viet Cuong Trinh\inst{3} \and Josef Pieprzyk \inst{4,5} \and  Thomas Plantard \inst{1} }
	\institute{Institute of Cybersecurity and Cryptology, School of Computing and Information Technology, University of Wollongong\\
		Northfields Avenue, Wollongong NSW 2522, Australia\\
		\email{qhl576@uowmail.edu.au, \{hduong,  wsusilo, thomaspl\}@uow.edu.au} 
		\and Department of Mathematical and Physical Sciences,
		Concordia University of Edmonton,
		7128 Ada Blvd NW, Edmonton,
		AB T5B 4E4, Canada \\
		\email{hatran1104@gmail.com}
		\and 
		Faculty of Information and Communication Technology, Hong Duc University,\\ 565 Quang Trung, Thanh Hoa, Vietnam\\
		   \email{trinhvietcuong@hdu.edu.vn}
		\and 	CSIRO Data61, Sydney, NSW, Australia.\\
		\and 
		Institute of Computer Science, Polish Academy of Sciences, Warsaw, Poland.\\
		\email{Josef.Pieprzyk@data61.csiro.au}
	}

	\maketitle              
	
\begin{abstract} Blind signatures play an important role in both electronic cash and electronic voting systems. 
Blind signatures should be secure against various attacks (such as signature forgeries). 
The work puts a special attention to secret key exposure attacks, which totally break digital signatures.
Signatures that resist secret key exposure attacks are called forward secure in the sense
that disclosure of a current secret key does not compromise past secret keys. 
This means that forward-secure signatures must include 
a mechanism for secret-key evolution over time periods.

This paper gives a construction of the \textit{first} blind signature that is forward secure. 
The construction is based on the SIS assumption in the lattice setting. 
The core techniques applied are the binary tree data structure 
for the time periods and the trapdoor delegation for the key-evolution mechanism.  	
	
	\end{abstract}
	\keywords{Key Exposure, Forward Security, Blind Signatures, Lattice-based Cryptography,
	SIS Assumption}
	%
	%
\section{Introduction}

Key exposure is one of most serious dangers for both secret and public key cryptography.
When secret keys are disclosed, cryptographic systems using them are completely broken.
Fortunately, there are some solutions that can be used to mitigate secret-key exposure.
They are summarized in \cite{BM99}.
Among many possible solutions, forward security seems to be the most promising when trying
to minimize a damage caused by secret-key disclosure.

For cryptographic protocols, 
forward security guarantees that even if the current session key is compromised by an adversary, 
she gets no information about previous session keys. 
This means that past sessions are still secure.
The notion of \textit{forward security} has been coined by G\"{u}nther in \cite{Gun90} and later used in  \cite{DOW92}
to evaluate security of authenticated key-exchange protocols.
Note that the authors of \cite{DOW92} and \cite{Gun90} call it \textit{forward secrecy}.
Ross Anderson in \cite{And02} extends the notion for digital signatures. 

Blind signatures, introduced by Chaum \cite{Chau83}, 
allow users to obtain message signatures from a signer without leaking information about message contents.
Blind signatures are indispensable in many applications such as 
electronic cash \cite[Section 1]{PS96} and electronic voting protocols \cite{Kuc10}.  
For such security-critical applications, one would expect blind signatures to be resistant against key disclosure.
An obvious solution is to incorporate forward security into blind signatures.
There are many works such as \cite{CHYC05,DCK03,JFC+10} that follow this line of investigation.
All solutions published so far rely on number-theoretic assumptions and consequently
are insecure against quantum adversaries.\\

\noindent \textbf{Related Works.} 
Bellare and Milner investigate secret-key exposure of digital signatures in their Crypto99 paper \cite{BM99}.
They formulate a security model and define forward-secure digital signatures.
They also design their forward-secure signature assuming intractability of integer factorization. 
Abdalla and Reyzin \cite{AR00}, and Itkis and Reyzin \cite{IR01} improve efficiency
the Bellare-Miner signature. 
The work of Duc et al. \cite{DCK03} is the first, which investigates forward security in the context of blind signatures.
The authors of  \cite{DCK03} adopt the definition and security model 
from \cite{BM99} to forward-secure blind signatures.
Their blind signature provides forward-secure unforgeability assuming intractability of the strong RSA problem
and access to random oracle.
Their security proof exploits the forking lemma by Pointcheval and Stern \cite{PS96}. 
Later, Chow et al. \cite{CHYC05} design forward-secure blind signature using bilinear pairings.
Jia et al. \cite{JFC+10} describe a forward-secure blind signature that is also based on bilinear pairings.
Boyd and Gellert \cite{BG19} give a comprehensive survey of methods of incorporating 
forward security to different cryptographic primitives. 
They also unify different approaches to forward security by generalising the notion and its terminology.\\

\noindent \textbf{Our Contributions and Approach.} 
Thanks to its quantum resistance,  lattice-based cryptography is attracting more and more 
attention from the research community.
However, there is no lattice-based construction of forward-secure blind signatures.
Our work fills the gap. We construct \textit{the first forward-secure blind signature in the lattice setting}. 
Forward security is proven in the random oracle model assuming intractability of the average case of short integer 
solution (SIS). We also use the rewinding (forking lemma) argument. 

Inspired by the works \cite{LDS20,Ruc10,ZJZ+18}, 
our signature is designed using the 3-move Fiat-Shamir transformation.
To achieve blindness, the rejection sampling technique is applied (see Section \ref{lattices}).
Thus, an extra move is needed to ensure that a final signature is valid.
In order to achieve forward security, we exploit both
a binary tree structure for lattice-based schemes introduced in \cite{CHKP10} 
and a trapdoor delegation from \cite{ABB10-EuroCrypt,CHKP10}. 

To obtain forward-secure signature, we need a mechanism that permits for a secret-key update
between two time intervals.
For this purpose, we use a binary tree of the depth $\ell$, 
whose leaves are labelled from left to right by consecutive time intervals $t=0$ up to $t=\tau-1$,
where $\tau=2^{\ell}$ is the total number of time intervals.
To generate the public key and the initial secret key, we choose random matrices 
$A_j^{(0)}$, $A_j^{(1)}$ for $j \in [\ell] $ together with a matrix/trapdoor pair  ($A_0, T_{A_0}$). 
Now, for any node $w^{(i)}=(w_1,\cdots, w_{i}) \in \{0,1\}^i$, 
we build up a concatenated matrix of form $F_{w^{(i)}}=[A_0\|A_1^{(w_1)} \|A_i^{(w_i)}]$. 
Then, we can compute a trapdoor for $\L_q^{\bot}(F_{w^{(i)}})$ using $T_{A_0}$. 
If the node $w^{(k)}$ is the ancestor of the node $w^{(i)}$, then we can obtain a trapdoor for $\L_q^{\bot}(F_{w^{(i)}} )$ 
from a trapdoor for $\L_q^{\bot}(F_{w^{(k)}} )$. 
However, one cannot get a trapdoor for $\L_q^{\bot}(F_{w^{(k)}} )$ from a trapdoor of $\L_q^{\bot}(F_{w^{(i)}} )$. 
This is the main idea behind the key evolution (key update) mechanism.

\section{Preliminaries} \label{lattices}
For a positive integer $\ell$, $[\ell]$ stands for the set $\{1, \cdots, \ell\}$. 
For a vector $\bf{c}$ and a matrix $S$, 
$\bf{c}[i]$ and $S[i]$ represent the $i$-th element of $\bf{c}$ and the $i$-th column of  $S$, respectively.\\

\noindent \textbf{Lattices}. Integer lattices are discrete subgroups of $\ZZ^m$. Formally, a lattice $\calL$ in $\ZZ^m$ is defined as
$\calL=\calL(B):=\left\{\sum_{i=1}^n\b_ix_i : x_i\in\ZZ,\forall i=1,\cdots,n \right\}\subseteq\ZZ^m,$
where $B=[\b_1,\cdots,\b_n]\in\ZZ^{m\times n}$ is called a basis of $\calL$, and $\b_i$'s are column vectors.
We call $n$ the rank of $\calL$. We say $\calL$ is a full rank lattice  if $n=m$. 

Given a matrix $A\in\ZZ^{n\times m}$ and a vector  $\bf{u}\in\ZZ_q^n$, we define two lattices: 
\begin{align*}
\Lp_q(A) &:= \left\{ \e\in\ZZ^m~\rm{s.t.}~A\e={\bf 0}\mod q \right\}, \\
\L_q^{\bf{u}}(A) &:=  \left\{ \e\in\ZZ^m~\rm{s.t.}~A\e=\bf{u}\mod q \right\}.
\end{align*}
They all are full rank lattices containing $q\ZZ^m$ and are called $q$-ary lattices. 
Note that if $\bf{v}\in\L_q^{\bf{u}}(A)$, then $\L_q^{\bf{u}}(A)=\Lp_q(A)+\bf{v}$.

For a set of vectors $S=\{\s_1,\cdots,\s_k\}$ in $\mathbb{R}^m$,  
we denote $\|S\|:=\max_i\|\s_i\|$. 
Also,  $\widetilde{S}:=\{\widetilde{\s}_1,\cdots,\widetilde{\s}_k \}$ stands for the 
Gram-Schmidt orthogonalization of the vectors $\s_1,\cdots,\s_k$ in that order. 
The Gram-Schmidt norm of $S$ is denoted by $\|\widetilde{S}\|$. 
A basis of a lattice is called \textit{short} if its Gram-Schmidt norm is short.

We recall the shortest independent vectors problem (SIVP), 
which is the worst case of approximation problem on lattices. 
Note that the \textit{$i$-th minimum} of a $n$-dimensional lattice $\mathcal{L}$ is defined as      
$\lambda_i(\mathcal{L}):=\min\{r: \dim(\text{span}(\mathcal{L} \cap \mathcal{B}_n(0,r))) \geq i\}$, 
where $\mathcal{B}_n(0,r)=\{\mathbf{x} \in \mathbb{R}^n: \Vert \mathbf{x}\Vert \leq r \}$.   

\begin{definition}[SIVP] \label{sivp}
Given a full-rank basis $B$ of an $n$-dimensional lattice $\calL$.
$\textsf{SIVP}_{\gamma}$ requires to output a set of $n$ linearly independent lattice 
vectors $S \subset \calL(B)$ such that $\Vert S\Vert \leq \gamma(n)\cdot \lambda_n(\calL(B))$.
\end{definition}

\noindent 
Below we define 
discrete Gaussian distribution over an integer lattice.
\begin{definition}[Gaussian Distribution]
	Let $\L\subseteq\ZZ^m$ be a lattice. For a vector $\mathbf{v}\in\RR^m$ and a positive parameter $s\in\RR$, define
	$\rho_{s,\mathbf{v}}(\x)=\exp\left(-\frac{\pi\|\x-\mathbf{v}\|^2}{s^2}\right)$ and $ 	\rho_{s,\bf{v}}(\L)=\sum_{\x\in\L}\rho_{s,\mathbf{v}}(\x).    $
	The discrete Gaussian distribution over $\L$ with center $\mathbf{v}$ and parameter $\sigma$ is
	$\forall \bf{y}\in\L,\cal{D}_{\L,s,\mathbf{v}}(\bf{y})=\frac{\rho_{s,\mathbf{v}}(\bf{y})}{\rho_{s,\mathbf{v}}(\L)}.$
\end{definition}
For convenience,  $\rho_s$ and $\cal{D}_{\L.s}$ denote $\rho_{\bf{0},s}$ and $\cal{D}_{\L,s,\bf{0}}$, respectively. 
When $s=1$, we will write $\rho$ instead of $\rho_1$. 
Also, $\cal{D}^m_{s,\mathbf{v}}$ and $\cal{D}^m_{s}$ stand for  
$\mathcal{D}_{\mathbb{Z}^m,s,\mathbf{v}}$ and $\mathcal{D}_{\mathbb{Z}^m,s}$, respectively.
\begin{lemma}[ {\cite[Lemma 4.5]{Lyu12}}] \label{lem5} 
	For any $\mathbf{v} \in \mathbb{Z}^m$, if $s=\alpha \cdot \Vert \mathbf{v}\Vert$, where $\alpha>0$, we have
	$\Pr\left[ {\mathcal{D}_{s}^m(\mathbf{x})}/{\cal{D}^m_{s,\mathbf{v}}(\mathbf{x})}\leq e^{12/\alpha+1/(2\alpha^2)}: \mathbf{x}\leftarrow \mathcal{D}_s^m\right] \geq 1-2^{-100}.$
\end{lemma}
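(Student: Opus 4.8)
The plan is to reduce the density-ratio bound to a one-dimensional Gaussian tail estimate on the inner product $\langle \x, \v\rangle$, following the argument behind \cite[Lemma 4.5]{Lyu12}. First I would write the ratio of densities explicitly. The decisive observation is that because $\v\in\ZZ^m$, the integer lattice is invariant under the integer translation $\x\mapsto\x-\v$, so the two Gaussian measures have equal total mass: $\rho_{s,\v}(\ZZ^m)=\rho_s(\ZZ^m)$. Hence the normalizing constants cancel and the ratio of densities equals the ratio of the raw Gaussian weights,
\[
\frac{\mathcal{D}_s^m(\x)}{\cal{D}^m_{s,\v}(\x)}=\exp\!\left(\frac{\|\v\|^2-2\langle \x,\v\rangle}{2s^2}\right),
\]
where I adopt the Gaussian normalization of \cite{Lyu12} under which the stated bound is calibrated.

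Next I would substitute $s=\alpha\|\v\|$, so that $s^2=\alpha^2\|\v\|^2$ and the constant piece $\|\v\|^2/(2s^2)$ collapses to exactly $1/(2\alpha^2)$, matching the second term in the target exponent. Peeling this off, the claimed inequality $\mathcal{D}_s^m(\x)/\cal{D}^m_{s,\v}(\x)\le e^{12/\alpha+1/(2\alpha^2)}$ becomes equivalent to the clean event
\[
-\langle \x,\v\rangle\le 12\,s\|\v\|,
\]
so it suffices to show that this holds except with probability at most $2^{-100}$ over $\x\leftarrow\mathcal{D}_s^m$.

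The main ingredient is a sub-Gaussian tail bound for the marginal $\langle \x,\v\rangle$ of a discrete Gaussian: for $\x\leftarrow\mathcal{D}_s^m$ one has $\Pr[\,|\langle \x,\v\rangle|>t\,s\|\v\|\,]\le 2e^{-t^2/2}$, the standard discrete-Gaussian concentration (Banaszczyk-type, as used in \cite{Lyu12}). Taking $t=12$ yields failure probability at most $2e^{-72}$; since $e^{-72}=2^{-72/\ln 2}<2^{-103}$, we get $2e^{-72}<2^{-100}$. On the complementary event $|\langle \x,\v\rangle|\le 12\,s\|\v\|$, the required inequality $-\langle \x,\v\rangle\le 12\,s\|\v\|$ holds a fortiori, which closes the argument.

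The only non-elementary step is the tail bound for $\langle \x,\v\rangle$; everything else is algebra together with the cancellation of normalizing constants that the integrality of $\v$ provides. I expect the conceptual obstacle to be justifying that the projection of a discrete Gaussian onto a fixed direction is itself sub-Gaussian with the stated width — this is precisely the concentration inequality imported from \cite{Lyu12}. Once it is available, the constant $12$ is seen to be chosen exactly so that $12^2/2=72$ comfortably exceeds $100\ln 2\approx 69.3$, which is what produces the $1-2^{-100}$ probability.
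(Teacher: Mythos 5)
The paper does not prove this lemma at all --- it is imported verbatim as \cite[Lemma 4.5]{Lyu12} --- and your argument correctly reproduces the standard proof from that source: cancellation of the normalizers $\rho_{s,\mathbf{v}}(\mathbb{Z}^m)=\rho_s(\mathbb{Z}^m)$ by integer-translation invariance, reduction to the event $-\langle\mathbf{x},\mathbf{v}\rangle\le 12\,s\|\mathbf{v}\|$ after substituting $s=\alpha\|\mathbf{v}\|$, and the sub-Gaussian tail bound on $\langle\mathbf{x},\mathbf{v}\rangle$ (Lemma 4.3 of \cite{Lyu12}) with $t=12$ giving $2e^{-72}<2^{-100}$. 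Your remark about adopting the $e^{-\|\mathbf{x}-\mathbf{v}\|^2/(2s^2)}$ normalization is apt, since the stated constants are indeed calibrated to that convention rather than to the $e^{-\pi\|\mathbf{x}-\mathbf{v}\|^2/s^2}$ definition this paper writes down.
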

\begin{remark}\label{rem2}
	In Lemma \ref{lem5}, if $\alpha=12$, i.e., $s=12\Vert \mathbf{v} \Vert $ then  ${\mathcal{D}_{s}^m(\mathbf{x})}/{\mathcal{D}_{s,\mathbf{v}}^m(\mathbf{x})}$ $\leq e^{1+1/288}$ with probability not smaller than $1-2^{-100}$. 
\end{remark}       
	
\noindent \textbf{Trapdoors and Trapdoor Delegation.}	
Alwen and Peikert~\cite{AP09} give an algorithm for sampling 
a uniform matrix $A\in\ZZ_q^{n\times m}$ together with a  short basis $T_A$ for $\Lp_q(A)$.
It is an improvement of the algorithm published by Ajtai in~\cite{Ajtai99}.
We call $T_A$ an  \textit{associated trapdoor} for $A$ or for $\Lp_q(A)$. 
	\begin{theorem}[\cite{AP09}]\label{thm:TrapGen}
		Let $q\geq 3$ be odd and $m:=\lceil 6n\log q\rceil$. There is a probabilistic polynomial-time (PPT) algorithm $\TrapGen(q,n)$ that outputs a pair $(A\in\ZZ_q^{n\times m},T_A \in\ZZ^{m\times m})$ such that $A$ is statistically close to a uniform matrix in $\ZZ_q^{n\times m}$ and $T_A$ is a basis for $\Lp_q(A)$ satisfying
		$\|\widetilde{T_A}\|\leq O(\sqrt{n\log q})\text{ and }\|T_A\|\leq O(n\log q)$
		with all but negligible probability in $n$.
	\end{theorem}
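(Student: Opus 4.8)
The plan is to prove the statement in the standard two-stage manner: first produce the matrix $A$ together with a full-rank set of \emph{short} vectors in $\Lp_q(A)$, and then convert that set into an honest basis without inflating its Gram--Schmidt norm. More precisely, I would first reduce the goal to constructing a pair $(A,S)$, where $A\in\ZZ_q^{n\times m}$ is statistically close to uniform and $S=\{\s_1,\dots,\s_m\}\subset\Lp_q(A)$ is a set of $m$ linearly independent short vectors. Given such an $S$, a standard ``convert-to-basis'' procedure (Micciancio--Goldwasser) outputs an actual basis $T_A$ of $\Lp_q(A)$ satisfying $\|\widetilde{T_A}\|\leq\|\widetilde{S}\|$; since the Gram--Schmidt norm does not grow, it suffices to bound $\|\widetilde{S}\|$. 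The ordinary-norm bound then follows from the GS bound by standard size reduction, which gives $\|T_A\|\leq\tfrac{\sqrt m}{2}\,\|\widetilde{T_A}\|$, turning an $O(\sqrt{n\log q})$ Gram--Schmidt bound into an $O(n\log q)$ bound on $\|T_A\|$ because $m=O(n\log q)$.

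For the construction of $(A,S)$ I would split the columns as $A=[A_1\,\|\,A_2]$ with $A_1\in\ZZ_q^{n\times m_1}$ sampled uniformly for a suitable $m_1=\Theta(n\log q)$, so that $A_1$ is primitive (its columns generate $\ZZ_q^n$) with overwhelming probability. I would then fix a structured block whose kernel lattice admits an explicit short basis, and tie it to $A_1$ through a small-norm matrix $R$, so that $A_2\equiv A_1 R \pmod q$ and each short kernel vector of the structured block lifts, via $R$, to a short vector of $\Lp_q(A)$. Tracking lengths through this lifting, the explicit basis of the structured part has Gram--Schmidt norm $O(1)$ and $R$ has small norm, which yields $\|\widetilde{S}\|=O(\sqrt{n\log q})$, as required.

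Next I would establish that $A$ is statistically close to uniform. This is a regularity (leftover hash lemma) argument: when $A_1$ is uniform and primitive and $R$ carries enough entropy, the product $A_1R$ is close to uniform and (almost) independent of $A_1$, hence $[A_1\,\|\,A_1R]$ is close to uniform over $\ZZ_q^{n\times m}$. The hypothesis $m=\lceil 6n\log q\rceil$ enters exactly here: $m$ must be large enough to drive the statistical distance below any inverse polynomial, in fact to make it negligible in $n$, which is what the factor $6$ buys.

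The main obstacle, and the point where the Alwen--Peikert refinement over Ajtai lives, is the simultaneous control of the two competing requirements. Making $A$ near-uniform forces one to inject entropy (through $R$ and through taking $m$ large), and this entropy tends to lengthen the kernel vectors and hence inflate $\|\widetilde{S}\|$; conversely, a very short basis corresponds to a lattice that is ``too structured'' to look uniform. Threading this trade-off to obtain the tight $O(\sqrt{n\log q})$ Gram--Schmidt bound requires a careful regularity estimate together with a lifting that provably does not blow up the Gram--Schmidt norm, and I expect these two quantitative steps to be the most delicate part of the argument, while the reduction to $(A,S)$ and the final norm conversions are comparatively routine.
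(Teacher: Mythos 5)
The paper does not prove this statement: Theorem~\ref{thm:TrapGen} is imported verbatim from Alwen--Peikert \cite{AP09} (itself refining Ajtai \cite{Ajtai99}), and the authors use it as a black box. So there is no in-paper proof to compare yours against; the only fair comparison is with the proof in \cite{AP09}.

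Judged on that basis, your outline follows the right blueprint --- split $A=[A_1\,\|\,A_2]$ with $A_1$ uniform and primitive, tie $A_2$ to $A_1$ through a small-norm $R$ so that short kernel vectors of a structured block lift to short vectors of $\Lp_q(A)$, invoke the leftover hash lemma for near-uniformity of $A_1R$, and finish with the Micciancio--Goldwasser full-rank-set-to-basis conversion (which preserves the Gram--Schmidt norm) plus size reduction to get $\|T_A\|\leq\frac{\sqrt m}{2}\|\widetilde{T_A}\|=O(n\log q)$. Those bookkeeping steps are correct. But the proposal is a plan rather than a proof: the structured block, the matrix $R$, and the explicit short basis are never actually constructed, and the two quantitative claims you yourself flag as delicate --- that the lifting keeps $\|\widetilde{S}\|=O(\sqrt{n\log q})$ and that the regularity estimate goes through at $m=\lceil 6n\log q\rceil$ --- are precisely where the entire content of \cite{AP09} lies (their block-triangular assembly of the basis is what achieves the $O(\sqrt{n\log q})$ Gram--Schmidt bound, versus the weaker bounds of Ajtai's original construction). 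One further caution: asserting that ``the explicit basis of the structured part has Gram--Schmidt norm $O(1)$'' is not something you can take for granted; getting a constant-norm (or even polylogarithmic) structured component compatible with the lifting is the technical heart of the construction and needs to be exhibited, not assumed. As a citation-level summary of why the theorem is true, your sketch is fine; as a standalone proof it has genuine gaps exactly at the steps that matter.
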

\noindent
Regarding Gaussian distribution, $q$-ary lattices and trapdoors, 
some useful results are presented in  the following lemma and theorem. 
	\begin{lemma}[{\cite[Corollary 5.4]{GPV08}}] \label{uniform}
	 Let $m, n, q$ be positive integers such that $q$ is prime and $m \geq 2n \log q$ . Then for all but $2q^{-n}$ fraction of all matrix $A\in \ZZ_q^{n \times m}$ and for any $s\geq \omega(\sqrt{\log m})$, the distribution of $\bf{u}:=A\bf{e} \text{ (mod } q)$ is statistically close to uniform over $\ZZ_q^n$, where $\bf{e} \leftarrow \cal{D}_{\ZZ^m,s}$. Furthermore, the conditional distribution of $\bf{e} \leftarrow \cal{D}_{\ZZ^m,s}$, given $A\bf{e}=\bf{u} \text{ (mod } q)$, is exactly $\cal{D}_{\L_q^{\textbf{u}}(A),s}$.
	\end{lemma}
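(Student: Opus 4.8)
The plan is to treat the two assertions separately, disposing of the ``furthermore'' clause first since it requires no probabilistic input. Fix any $\u\in\ZZ_q^n$ for which $\L_q^{\u}(A)$ is nonempty, and condition the sample $\e\leftarrow\cal{D}_{\ZZ^m,s}$ on the event $A\e=\u\bmod q$. For every $\bf{y}$ in the coset $\L_q^{\u}(A)=\Lp_q(A)+\bf{y}$ the conditional probability is, by Bayes' rule, proportional to the unconditioned weight $\rho_s(\bf{y})$; renormalising over the coset by $\rho_s(\L_q^{\u}(A))$ produces precisely $\DuA(\bf{y})$. As this holds for every reachable $\u$ and every $A$, the conditional law equals $\DuA$ with no further hypotheses.

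The substantive claim is the near-uniformity of $\u=A\e\bmod q$, which I would establish in three steps. First I would fix the image: the map $\e\mapsto A\e\bmod q$ descends to the quotient $\ZZ^m/\Lp_q(A)$, and when $A$ has full rank $n$ over $\ZZ_q$ it induces a group isomorphism $\ZZ^m/\Lp_q(A)\cong\ZZ_q^n$. A routine counting argument shows that $A$ fails to be surjective for at most a $q^{-n}$ fraction of all $A\in\ZZ_q^{n\times m}$ (in fact far fewer, given $m\ge 2n\log q$), and I would absorb these matrices into the exceptional set. Second, I would invoke the smoothing lemma for quotients: for any lattice $\Lambda'\subseteq\ZZ^m$ and any $s\ge\eta_\epsilon(\Lambda')$, reducing a sample of $\cal{D}_{\ZZ^m,s}$ modulo $\Lambda'$ yields a distribution within statistical distance $2\epsilon$ of the uniform law on $\ZZ^m/\Lambda'$. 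Taking $\Lambda'=\Lp_q(A)$ and transporting along the isomorphism of the first step shows that $A\e\bmod q$ is within $2\epsilon$ of uniform over $\ZZ_q^n$ as soon as $s\ge\eta_\epsilon(\Lp_q(A))$.

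The third and hardest step is to make the condition $s\ge\eta_\epsilon(\Lp_q(A))$ hold for almost every $A$ once $s\ge\omega(\sqrt{\log m})$. For this I would use the standard estimate $\eta_\epsilon(\Lambda')\le\sqrt{\ln\!\big(2m(1+1/\epsilon)\big)/\pi}\,\big/\,\lambda_1\!\big((\Lambda')^{*}\big)$, which reduces the task to lower-bounding the minimum distance of the dual. Since $(\Lp_q(A))^{*}=\tfrac1q\,\L_q(A)$ with $\L_q(A)=\{\,A^{\top}\s\bmod q\,\}$, this is a lower bound on $\lambda_1(\L_q(A))$, and a union bound over all candidate short vectors shows that, except for a $q^{-n}$ fraction of $A$, the lattice $\L_q(A)$ contains no nonzero vector short enough to spoil $\eta_\epsilon(\Lp_q(A))\le\omega(\sqrt{\log m})$ for a suitably negligible $\epsilon=\epsilon(n)$. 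The rank-deficiency set and the short-dual-vector set together account for the claimed $2q^{-n}$ fraction, and for the surviving $A$ the choice $s\ge\omega(\sqrt{\log m})$ forces $s\ge\eta_\epsilon(\Lp_q(A))$, so the $2\epsilon$ bound of the second step is negligible. I expect this last step to be the main obstacle: the union-bound/counting estimate must be quantitative enough to yield a genuine $O(q^{-n})$ failure probability rather than a mere $o(1)$, and it must simultaneously control both the short-dual-vector event and, implicitly, the density of $\L_q(A)$ so that the smoothing bound and the surjectivity bound share the same $q^{-n}$ savings.
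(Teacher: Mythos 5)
The paper does not actually prove this statement: it is imported verbatim as a black-box citation of \cite[Corollary 5.4]{GPV08}, so there is no in-paper proof to compare against. Your proposal is a correct reconstruction of the argument in that cited source, and it follows the same route GPV do: the ``furthermore'' clause by Bayes' rule on the coset $\L_q^{\bf{u}}(A)=\Lp_q(A)+\bf{y}$, and the uniformity clause by (i) discarding the $\approx q^{-n}$ fraction of non-surjective $A$ so that $\ZZ^m/\Lp_q(A)\cong\ZZ_q^n$, (ii) the smoothing lemma for the quotient, and (iii) a counting/union-bound lower estimate on the minimum of the dual lattice $\frac{1}{q}\L_q(A^{\top})$ to force $\eta_\epsilon(\Lp_q(A))\leq\omega(\sqrt{\log m})$ for all but a $q^{-n}$ fraction of $A$. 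The only cosmetic caveat is that the standard transference bound on $\eta_\epsilon$ is stated in terms of $\lambda_1^{\infty}$ of the dual rather than the $\ell_2$ minimum, which slightly changes the counting in your third step but not its conclusion; with that adjustment your sketch is the proof of \cite[Lemmas 5.1--5.3 and Corollary 5.4]{GPV08}.
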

		
\begin{theorem}\label{thm:Gauss}
	Let $q> 2$ and let $A, B$ be a matrix in $\ZZ_q^{n\times m}$ with $m>n$. 
	Let $T_A, T_B$ be a basis for $\Lp_q(A)$ and  $\Lp_q(B)$, respectively. 
	Then the following statements are true.
		\begin{enumerate}
	
		    \item {\cite[Lemma 4.4]{MR04}} For $s\geq\|\widetilde{T_A}\|\cdot  \omega(\sqrt{\log n})$, we have $$\Pr[\x\gets\DuA~:~\|\x\|>s\sqrt{m}]\leq\mathsf{negl}(n).$$

	\item \cite[Theorem 4.1]{GPV08} There is a PPT algorithm $\mathsf{SampleD}(B, s, \mathbf{v}$) that, given a basis $B$ of an $n$-dimensional lattice $\L:=\calL(B)$, a parameters $s \geq \Vert \widetilde{B} \Vert \cdot \omega(\sqrt{\log n})$ and a center $\mathbf{v} \in \mathbb{R}^n$, outputs a sample from a distribution statistically close to  $\cal{D}_{\L,s,\mathbf{v}}$.

	\item {\cite[Subsection 5.3.2]{GPV08}} There is a PPT algorithm $\mathsf{SampleISIS}(A,T_A,s,\mathbf{u})$ that, on input a matrix $A$, its associated trapdoor $T_A$, a Gaussian parameter $s \geq \Vert \widetilde{T_A} \Vert \cdot \omega(\sqrt{\log n})$ and a given vector $\mathbf{u}$,  outputs a vector $\mathbf{e}$ from $ \cal{D}_{\L_q^{\textbf{u}}(A),s}$. It performs as follows: first it chooses an arbitrary $\mathbf{t}\in \ZZ^m$ satisfying that $A\bf{t}=\bf{u} \text{ (mod } q)$ ($\bf{t}$ exists for all but an at most $q^{-n}$ fraction of $A$).  It then samples $\mathbf{w} \leftarrow \cal{D}_{\L_q^{\bot}(A),s}$ using $\mathsf{SampleD}( T_A,s,-\mathbf{t})$ and finally outputs $\mathbf{e}=\mathbf{t}+\mathbf{w}.$
	\item {\cite[Section 2]{LDS20}}
	There is a PPT  algorithm $\mathsf{SampleKey}(A, T_A, s, K$) that takes as input a matrix $A \in \mathbb{Z}_q^{n \times m}$, its associated trapdoor $T_A\in \mathbb{Z}_q^{m \times m}$, a real number $s \geq \Vert \widetilde{T_A} \Vert \cdot \omega(\sqrt{\log n})$ and  matrix $ K \in \mathbb{Z}_q^{n \times k}$ to output a random (column) matrix $S\in \mathbb{Z}^{m \times k}$ such that the $j$-th column $S[j] \in \mathsf{Dom}:= \{\mathbf{e} \in \mathbb{Z}^m: \Vert \mathbf{e} \Vert \leq s\sqrt{m} \}$ for all $j\in[k]$ and that $A \cdot S=K \text{ (mod } q)$ with overwhelming probability. The distribution of $S$  is $\mathcal{D}_{\mathbb{Z}^{m\times k},s}$ statistically close to the uniform distribution over $\mathsf{Dom}^k$. It performs by calling $k$ times the algorithm $\mathsf{SampleISIS}(A,T_A,s,\mathbf{u})$ in which $\mathbf{u}=K[j]$ for  $j \in \{1, \cdots, k\}$.

	\end{enumerate}
\end{theorem}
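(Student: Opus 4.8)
The plan is to treat Theorem~\ref{thm:Gauss} as a compendium of four facts whose first two parts are imported verbatim from the cited references, leaving genuine (if light) verification only for the two sampling procedures. Parts (1) and (2) are the tail bound of Micciancio--Regev and the correctness of $\mathsf{SampleD}$ of Gentry--Peikert--Vaikuntanathan, respectively; I would simply invoke \cite{MR04} and \cite{GPV08} for these, checking only that the Gaussian-parameter hypothesis $s \geq \|\widetilde{T_A}\| \cdot \omega(\sqrt{\log n})$ (resp. $s \geq \|\widetilde{B}\| \cdot \omega(\sqrt{\log n})$) is the very condition I will propagate to the remaining parts. The real substance lies in parts (3) and (4), which \emph{assemble} the earlier parts together with Lemma~\ref{uniform}.

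For part (3) ($\mathsf{SampleISIS}$) I would verify two things: correctness and output distribution. Correctness is immediate from the decomposition $\e = \mathbf{t} + \mathbf{w}$: since $\mathbf{w} \in \Lp_q(A)$,
\[
A\e = A\mathbf{t} + A\mathbf{w} = \u + \mathbf{0} = \u \pmod q,
\]
so $\e \in \L_q^{\u}(A)$. The existence of a preimage $\mathbf{t}$ with $A\mathbf{t} = \u \pmod q$ for all but a $q^{-n}$ fraction of $A$ follows from surjectivity of $A$ modulo $q$, consistent with the hypothesis of Lemma~\ref{uniform}. For the distribution I would invoke the coset identity recalled in the preliminaries, namely $\L_q^{\u}(A) = \Lp_q(A) + \mathbf{t}$: sampling $\mathbf{w}$ from $\cal{D}_{\Lp_q(A),s,-\mathbf{t}}$ via $\mathsf{SampleD}(T_A, s, -\mathbf{t})$ (part (2)) and translating by $\mathbf{t}$ produces, by the definition of the shifted discrete Gaussian, a sample statistically close to $\DuA$.

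Part (4) ($\mathsf{SampleKey}$) is then the cleanest step: run $\mathsf{SampleISIS}(A, T_A, s, K[j])$ independently for each column $j \in [k]$, obtaining $S[j]$ distributed (statistically close to) $\cal{D}_{\L_q^{K[j]}(A),s}$. Stacking the columns gives
\[
A \cdot S = \left[\, A\,S[1] \mid \cdots \mid A\,S[k]\,\right] = \left[\, K[1] \mid \cdots \mid K[k]\,\right] = K \pmod q,
\]
and the per-column bound $\|S[j]\| \leq s\sqrt{m}$, i.e.\ $S[j] \in \mathsf{Dom}$, is precisely part (1) applied with target $K[j]$. Independence of the $k$ draws yields the product law $\cal{D}_{\ZZ^{m \times k},s}$.

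I do not expect a hard obstacle, since the deep facts are black-boxed; the only delicate point is the \emph{statistical-closeness bookkeeping}. The output of $\mathsf{SampleD}$ is only statistically close to the ideal discrete Gaussian, and the conditional-distribution clause of Lemma~\ref{uniform} is likewise an approximation, so I would track that these negligible distances combine only additively, surviving both the translation in part (3) and the union bound over the $k$ independent columns together with the $k$ norm-bound failure events in part (4). Ensuring that a single choice $s \geq \|\widetilde{T_A}\| \cdot \omega(\sqrt{\log n})$ is valid uniformly across all invocations is the one consistency condition worth stating explicitly.
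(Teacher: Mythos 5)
Your proposal is correct and matches the paper's treatment: the paper gives no proof of Theorem~\ref{thm:Gauss} at all, simply citing \cite{MR04}, \cite{GPV08} and \cite{LDS20}, and the descriptions of $\mathsf{SampleISIS}$ and $\mathsf{SampleKey}$ embedded in the statement are exactly the coset-translation and column-by-column constructions you verify. Your added bookkeeping on how the negligible statistical distances accumulate additively across the $k$ invocations is a detail the paper (and the cited sources' usage here) leaves implicit, but it introduces no divergence in approach.
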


\noindent
In order to securely delegate a basis for an extended lattice, 
one can call the $\mathsf{ExtBasis}$ algorithm described below.
\begin{lemma}[{\cite[Theorem 5]{ABB10-EuroCrypt}}] Let $A:=[A_1\|A_2\|A_3]$ be a concatenation of three matrices 
$A_1,A_2,A_3$. Suppose that $T_{A_2}$ is a basis of $\Lp_q(A_2)$. 
Then, there is a deterministic polynomial time algorithm  $\mathsf{ExtBasis}(A,T_{A_2})$ 
that outputs a basis $T_A$ for $\Lp_q(A)$ such that $\|\widetilde{T_A}\|=\|\widetilde{T}_{A_2}\|$.
\end{lemma}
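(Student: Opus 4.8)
The plan is to exhibit the deterministic algorithm $\ExtBasis$ explicitly and then check the three properties it must satisfy: that its output lies in $\Lp_q(A)$, that it is a genuine basis of that lattice, and that its Gram--Schmidt norm equals $\|\widetilde{T_{A_2}}\|$.

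First I would set up the construction. Write the three blocks of $A=[A_1\|A_2\|A_3]$ with column widths $m_1,m_2,m_3$. Using that $A_2$ has rank $n$ over $\ZZ_q$ (which is what lets $\Lp_q(A_2)$ admit the short basis $T_{A_2}$, and which holds for every matrix to which we apply the lemma), I would deterministically solve the linear systems $A_2W_1=-A_1$ and $A_2W_3=-A_3 \pmod q$ for integer matrices $W_1,W_3$, and output
$$T_A:=\begin{pmatrix} I & \mathbf{0} & \mathbf{0}\\ W_1 & T_{A_2} & W_3\\ \mathbf{0} & \mathbf{0} & I \end{pmatrix}.$$
Each column lies in $\Lp_q(A)$: for the three column shapes the product with $A$ telescopes, e.g.\ $A_1[i]+(A_2W_1)[i]=A_1[i]-A_1[i]=\mathbf{0}\bmod q$, while the middle columns give $A_2T_{A_2}[j]=\mathbf{0}$.

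Next I would prove $T_A$ is a basis. Linear independence is immediate from the block-triangular shape (two identity blocks and the nonsingular $T_{A_2}$). To see the columns generate all of $\Lp_q(A)$, take $\v=(\v_1,\v_2,\v_3)\in\Lp_q(A)$; subtracting the integer combination of the first $m_1$ and last $m_3$ columns determined by $\v_1$ and $\v_3$ clears the top and bottom blocks, and the surviving middle vector $\v_2-W_1\v_1-W_3\v_3$ satisfies $A_2(\v_2-W_1\v_1-W_3\v_3)=A\v=\mathbf{0}\bmod q$. Hence it lies in $\Lp_q(A_2)$ and is an integer combination of the columns of $T_{A_2}$, so $\v$ is an integer combination of the columns of $T_A$.

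The hard part will be the exact Gram--Schmidt equality, and the whole argument hinges on the column ordering. I would orthogonalize the $T_{A_2}$-block first. Its Gram--Schmidt vectors are exactly the embedded $\widetilde{T_{A_2}}$ columns and, because $\Lp_q(A_2)\supseteq q\ZZ^{m_2}$ is full rank, they span the entire middle coordinate block. Therefore projecting any later column---of shape $(\text{unit},W_1[i],\mathbf{0})$ or $(\mathbf{0},W_3[k],\text{unit})$---onto the orthogonal complement of that span deletes its whole middle part and leaves a standard unit vector supported only on the top or bottom block; these residual vectors are mutually orthogonal, hence are their own Gram--Schmidt vectors, of norm $1$. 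This yields $\|\widetilde{T_A}\|=\max(\|\widetilde{T_{A_2}}\|,1)$. Finally, since the product of the Gram--Schmidt lengths of $T_{A_2}$ equals the covolume $q^n\geq 1$, at least one of them is $\geq 1$, so $\|\widetilde{T_{A_2}}\|\geq 1$ and the maximum collapses to $\|\widetilde{T_{A_2}}\|$, giving the claimed equality.
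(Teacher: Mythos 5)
Your construction and analysis are correct and are essentially the standard $\ExtBasis$ argument from Cash et al.\ and Agrawal--Boneh--Boyen, which the paper imports by citation without reproving. The only point to make explicit is that, since the Gram--Schmidt norm depends on column order, the algorithm must output $T_A$ with the $T_{A_2}$-block columns listed first (not in the left-to-right order in which you typeset the matrix) --- a point you already flag and resolve correctly, together with the needed observations that $A_2$ generates $\ZZ_q^n$ and that $\|\widetilde{T}_{A_2}\|\geq 1$.
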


\noindent \textbf{Hardness Assumption}. 
Forward-security  of our construction is proven assuming hardness of the SIS problem.
\begin{definition}[$l_2$-\textsf{SIS}$_{q,n,m,\beta}$ problem, {\cite[Definition 3.1]{Lyu12}}] \label{def2}
	Given a random matrix $A \leftarrow_{\$} \mathbb{Z}_q^{n \times m}$, find a vector $\mathbf{z}\in \mathbb{Z}^{m} \setminus \{\mathbf{0}\}$ such that $A\mathbf{z} =\mathbf{0} \text{ (mod } q)$ and $\Vert \mathbf{z}\Vert \leq \beta.$  
\end{definition}

\noindent
The hardness of $l_2$-\textsf{SIS} is stated by the following theorem.

\begin{theorem}[{\cite[Proposition 5.7]{GPV08}}]\label{thm:SIS}
For any poly-bounded $m$, $\beta=poly(n)$ and for any prime $q \geq \beta \cdot \omega(\sqrt{n\log n})$, 
the average case problem $l_2-\mathsf{SIS}_{q,n,m,\beta}$ is as hard as approximating the SIVP problem 
(among others) in the worst case for a factor $\gamma=\beta\cdot \tilde{O}(\sqrt{n})$.
\end{theorem}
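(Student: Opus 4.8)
The plan is to establish the hardness of $l_2\text{-}\mathsf{SIS}_{q,n,m,\beta}$ by a worst-case to average-case reduction, following the standard route initiated by Ajtai and refined in \cite{GPV08}. The statement asserts that solving the average-case $\mathsf{SIS}$ instance over a uniformly random $A \leftarrow_{\$} \mathbb{Z}_q^{n \times m}$ is at least as hard as approximating $\mathsf{SIVP}_{\gamma}$ in the worst case with $\gamma = \beta \cdot \tilde{O}(\sqrt{n})$. Since this is quoted verbatim as \cite[Proposition 5.7]{GPV08}, the intended proof is a citation-level appeal to that reduction; what I would actually sketch is \emph{why} the parameter constraints in the hypothesis are exactly what the reduction needs.

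First I would recall the generic template: given a worst-case lattice $\calL$ on which we must solve $\mathsf{SIVP}_{\gamma}$, the reduction uses an (iterative) sampling procedure to produce vectors from a discrete Gaussian $\cal{D}_{\calL, s}$ at a parameter $s$ that is related to $\lambda_n(\calL)$ up to the desired approximation factor. These Gaussian samples, reduced appropriately, induce an essentially uniform matrix $A \in \mathbb{Z}_q^{n \times m}$ together with auxiliary short vectors; here the hypothesis $m$ poly-bounded and $q \geq \beta \cdot \omega(\sqrt{n \log q})$ guarantees, via a regularity/leftover-hash argument (cf. Lemma~\ref{uniform}), that the distribution of $A$ fed to the $\mathsf{SIS}$ oracle is statistically close to uniform, so the oracle must succeed. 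A short solution $\mathbf{z}$ with $\|\mathbf{z}\| \leq \beta$ and $A\mathbf{z} = \mathbf{0} \pmod q$ is then pulled back through the embedding to yield a short lattice vector in $\calL$, and iterating this across the $n$ independent directions produces the required set of $n$ short independent vectors certifying $\mathsf{SIVP}_{\gamma}$.

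The key steps, in order, are: (i) fix the Gaussian width $s$ so that samples live at scale comparable to $\lambda_n(\calL)$; (ii) verify that the prime $q \geq \beta \cdot \omega(\sqrt{n \log n})$ is large enough that the short solution $\mathbf{z}$, of norm at most $\beta$, does \emph{not} wrap around modulo $q$ when lifted to $\mathbb{Z}^m$, so that the pulled-back vector is genuinely nonzero and short in $\calL$; (iii) bound the resulting vector length by $\beta \cdot s$ and hence by $\gamma \cdot \lambda_n(\calL)$ with $\gamma = \beta \cdot \tilde{O}(\sqrt{n})$; and (iv) repeat to obtain full rank, discarding linearly dependent outputs. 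The role of each parameter hypothesis is thus to make one of these quantitative bounds go through.

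The main obstacle, and the technically deepest part of the \cite{GPV08} argument, is step (i)--(ii): controlling the Gaussian sampling so that the induced $A$ is statistically uniform \emph{while} simultaneously keeping the oracle's output short enough that the pullback beats $\lambda_n(\calL)$ by only the claimed factor. This is precisely where the smoothing-parameter machinery and the constraint $q = \mathsf{poly}(n)$ bounded below by $\beta \cdot \omega(\sqrt{n \log n})$ are indispensable: too small a $q$ ruins uniformity, while too large a $q$ degrades the approximation factor $\gamma$. For the purposes of this paper I would not reprove this reduction but invoke it as a black box, since our forward-security proof only needs the \emph{conclusion} that breaking average-case $l_2\text{-}\mathsf{SIS}$ at these parameters is infeasible under the worst-case $\mathsf{SIVP}_{\gamma}$ assumption.
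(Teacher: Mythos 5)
The paper gives no proof of this theorem: it is imported verbatim as \cite[Proposition 5.7]{GPV08} and used as a black box, which is exactly what you conclude you would do, so your treatment matches the paper's. Your accompanying sketch of the worst-case to average-case reduction (Gaussian sampling over the worst-case lattice to induce a near-uniform $A$, pulling the short $\mathsf{SIS}$ solution back to a short lattice vector, and iterating to get $n$ independent vectors) is a fair high-level summary of the \cite{GPV08} argument, though the precise roles you assign to the bound $q \geq \beta\cdot\omega(\sqrt{n\log n})$ are heuristic glosses rather than the actual mechanism; since you explicitly defer to the citation, this does not affect correctness.
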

\noindent
Define the \textsf{SIS}$_{q,n,m,d}$ distribution  by the pair $(A, A\bf{s})$,
where $A \xleftarrow{\$} \mathbb{Z}_q^{n \times m}$ and 
$\bf{s} \xleftarrow{\$} \{-d, \cdots, 0, \cdots, d\}$ are chosen at random.
The distribution is characterised by the following lemma.
\begin{lemma}[Discussed in \cite{Lyu12}] \label{lem8}
	For $d \gg q^{m/n}$, the $\mathsf{SIS}_{q,n,m,d}$ distribution is statistically close to uniform
	over $\mathbb{Z}_q^{n \times m} \times \mathbb{Z}_q^{n}$.
	Given $(A, \bf{u})$ from the $\mathsf{SIS}_{q,n,m,d}$  distribution, there are many possible solutions $\bf{s}$ satisfying
	$A\bf{s} = \bf{u}$.
\end{lemma}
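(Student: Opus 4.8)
The plan is to obtain both assertions from the \emph{leftover hash lemma} applied to the family of linear maps $h_A\colon \mathbf{s}\mapsto A\mathbf{s}\bmod q$ indexed by $A\in\mathbb{Z}_q^{n\times m}$. By the very definition of the $\mathsf{SIS}_{q,n,m,d}$ distribution, the matrix $A$ is already uniform on $\mathbb{Z}_q^{n\times m}$ and is sampled independently of the randomness used afterwards, so the first marginal matches the target exactly. It therefore suffices to bound, on average over $A$, the statistical distance $\Delta$ between $A\mathbf{s}$ and the uniform distribution on $\mathbb{Z}_q^n$, where $\mathbf{s}$ is uniform on the box $\{-d,\dots,d\}^m$.

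The key step is to control the collision probability $\mathrm{Col}:=\Pr_{A,\mathbf{s},\mathbf{s}'}[A\mathbf{s}=A\mathbf{s}']$. Writing $\mathbf{x}=\mathbf{s}-\mathbf{s}'$ and conditioning on $\mathbf{x}$, a uniform $A$ maps a fixed $\mathbf{x}\not\equiv\mathbf{0}\bmod q$ to the uniform distribution on $\mathbb{Z}_q^n$, so the inner collision probability is exactly $q^{-n}$ and the only bad event is $\mathbf{s}\equiv\mathbf{s}'\bmod q$. If $2d<q$, distinct box elements remain distinct modulo $q$, the family is genuinely universal, and the min-entropy form of the lemma yields $\Delta\le\tfrac12\sqrt{q^{n}/(2d+1)^{m}}$, which is negligible once $(2d+1)^m\gg q^n$. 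For the large-$d$ regime actually forced by the hypothesis (note that $q^{m/n}>q$ because $m>n$ in the SIS setting), I would instead invoke the collision-probability form: bounding $\Pr[\mathbf{s}\equiv\mathbf{s}'\bmod q]$ coordinatewise shows it lies within a factor $(1+q/(2d+1))^{2m}$ of $q^{-m}$, so $\mathrm{Col}=q^{-n}(1+o(1))$ and $\Delta\le\tfrac12\sqrt{q^{n}\cdot\mathrm{Col}-1}$ is negligible because $m>n$ with a comfortable margin. Either way the hypothesis $d\gg q^{m/n}$ is amply sufficient (it is in fact stronger than the tight threshold $d\gg q^{n/m}$, since $m>n$ gives $q^{m/n}>q^{n/m}$), which settles the first claim.

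The second claim follows by counting together with the near-uniformity just established. The map $\mathbf{s}\mapsto A\mathbf{s}$ carries the $(2d+1)^m$ points of the box into the $q^n$ points of $\mathbb{Z}_q^n$, so the \emph{average} number of preimages of a target is $(2d+1)^m/q^n$, which is large (superpolynomially so under the stated hypothesis); combining this average with the fact that $A\mathbf{s}$ is almost uniform shows that for $\mathbf{u}$ drawn from the $\mathsf{SIS}_{q,n,m,d}$ distribution the solution set $\{\mathbf{s}\in\{-d,\dots,d\}^m : A\mathbf{s}=\mathbf{u}\}$ typically has many elements. The main obstacle I anticipate is precisely the gap between true and almost universality: because the relevant regime has $d$ far exceeding $q$, one cannot invoke the textbook universal-hash bound directly and must instead handle the modular-reduction collisions through the coordinatewise collision estimate above.
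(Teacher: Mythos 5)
Your argument is correct, and it is essentially the canonical one: the paper itself offers no proof of Lemma~\ref{lem8}, merely importing it from \cite{Lyu12}, and the discussion there is exactly the leftover-hash-lemma route you take (uniform marginal on $A$, plus a collision-probability bound for the map $\mathbf{s}\mapsto A\mathbf{s} \bmod q$). You correctly isolate the one genuine subtlety, namely that in the regime $d>q$ the family is only \emph{almost} universal because of wrap-around collisions $\mathbf{s}\equiv\mathbf{s}'\pmod q$, and your coordinatewise estimate $\Pr[\mathbf{s}\equiv\mathbf{s}'\bmod q]\leq q^{-m}(1+q/(2d+1))^{m}$ combined with $\mathrm{Col}\leq \Pr[\mathbf{s}\equiv\mathbf{s}'\bmod q]+q^{-n}$ does dispose of it, since $m\gg n$ here. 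You are also right that the stated hypothesis $d\gg q^{m/n}$ is far stronger than the LHL threshold $d\gg q^{n/m}$ (the exponent in the paper appears to be inverted relative to the tight condition). Two minor remarks: the final step needs the collision-probability form $\Delta\leq\frac{1}{2}\sqrt{q^{n}\cdot\mathrm{Col}-1}$ applied jointly to $(A,A\mathbf{s})$, which you invoke correctly; and for the second claim the near-uniformity is not actually needed --- since $\mathbf{u}=A\mathbf{s}$ is size-biased, at most $kq^{n}$ of the $(2d+1)^{m}$ vectors $\mathbf{s}$ can land on a syndrome with fewer than $k$ preimages, so $\Pr[\,|\{\mathbf{s}':A\mathbf{s}'=\mathbf{u}\}|<k\,]\leq kq^{n}/(2d+1)^{m}$, which is the pure counting argument underlying the paper's Lemma~\ref{lem4}.
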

\noindent \textbf{Rejection Sampling}.  
This is an aborting technique that is frequently used in lattice-based cryptography. 
The technique plays an important role in guaranteeing the blindness 
as well as it is used in simulation of the forward-security proof for our signature. 
\begin{lemma}[Rejection Sampling, {\cite[Theorem 4.6]{Lyu12}}]
	\label{lem3}
	Let $V=\{\mathbf{v}\in \mathbb{Z}^m: \Vert \mathbf{v} \Vert \leq \delta \}$ be a subset of $\mathbb{Z}^m$ 
	and  $s=\omega(\delta\log\sqrt{m})$ be a real number. Define a probability distribution $h:V \rightarrow \mathbb{R}$. 
	Then there exists a universal $M=O(1)$ satisfying that two algorithms $\mathcal{A}$ and $\mathcal{B}$ defined as:
	\begin{enumerate}
	\item ($\mathcal{A}$): $\mathbf{v} \leftarrow h$, $\mathbf{z} \leftarrow \mathcal{D}_{\mathbf{v},s} ^m$, 
		output $(\mathbf{z}, \mathbf{v} )$ with probability 
		$\min(\frac{\mathcal{D}_{s} ^m(\mathbf{z},)}{M\mathcal{D}_{\mathbf{v},s} ^m(\mathbf{z})},1)$, and
	\item ($\mathcal{B}$): $\mathbf{v} \leftarrow h$, $\mathbf{z} \leftarrow \mathcal{D}_{s}^m $, 
		output $(\mathbf{z}, \mathbf{v} )$ with probability $1/M$,
	\end{enumerate}
	have a negligible statistical distance $\Delta(\mathcal{A}, \mathcal{B}):=2^{-\omega(\log m)}/M$.
	Moreover, the probability that $\mathcal{A}$ outputs something is at least $(1-2^{-\omega(\log m)})/M$. 
	In particular, if $s=\alpha \delta$ for any $\alpha >0$, then $M=e^{12/\alpha+1/(2\alpha^2)}$,  
	$\Delta(\mathcal{A}, \mathcal{B})=2^{-100}/M$ and the probability that $\mathcal{A}$ 
	outputs something is at least $(1-2^{-100})/M$.
\end{lemma}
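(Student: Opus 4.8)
The plan is to run the standard rejection-sampling calculation, recasting both $\mathcal{A}$ and $\mathcal{B}$ as distributions on the augmented output space $(\ZZ^m\times V)\cup\{\bot\}$, where $\bot$ records the abort event, and comparing them pointwise. For a fixed $\mathbf{v}\leftarrow h$, algorithm $\mathcal{A}$ emits a given $\mathbf{z}$ with conditional mass
\begin{equation*}
p_{\mathcal{A}}(\mathbf{z}\mid\mathbf{v}) = \mathcal{D}_{\mathbf{v},s}^m(\mathbf{z})\cdot\min\!\left(\frac{\mathcal{D}_{s}^m(\mathbf{z})}{M\,\mathcal{D}_{\mathbf{v},s}^m(\mathbf{z})},1\right) = \min\!\left(\frac{\mathcal{D}_{s}^m(\mathbf{z})}{M},\,\mathcal{D}_{\mathbf{v},s}^m(\mathbf{z})\right),
\end{equation*}
whereas $\mathcal{B}$ emits $\mathbf{z}$ with mass $p_{\mathcal{B}}(\mathbf{z}\mid\mathbf{v})=\mathcal{D}_{s}^m(\mathbf{z})/M$, independent of $\mathbf{v}$. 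The structural fact I would record first is that $p_{\mathcal{A}}(\mathbf{z}\mid\mathbf{v})\le p_{\mathcal{B}}(\mathbf{z}\mid\mathbf{v})$ for every $\mathbf{z}$, with equality off the ``clipping set'' $C_{\mathbf{v}}:=\{\mathbf{z}:\mathcal{D}_{s}^m(\mathbf{z})>M\,\mathcal{D}_{\mathbf{v},s}^m(\mathbf{z})\}$, that is, exactly where the ratio inside the $\min$ exceeds $1$. Hence the whole discrepancy between the two algorithms is localised on $C_{\mathbf{v}}$.

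The crux is to pick $M=O(1)$ making $C_{\mathbf{v}}$ negligible, and here I would invoke Lemma \ref{lem5}. Since the normalisations of $\mathcal{D}_s^m$ and $\mathcal{D}_{\mathbf{v},s}^m$ agree for $\mathbf{v}\in\ZZ^m$, the ratio equals $\exp\!\big(\pi(\|\mathbf{v}\|^2-2\langle\mathbf{z},\mathbf{v}\rangle)/s^2\big)$, so membership in $C_{\mathbf{v}}$ is a lower-tail event for $\langle\mathbf{z},\mathbf{v}\rangle$. For any fixed $\mathbf{v}\in V$ we have $\|\mathbf{v}\|\le\delta$, so setting $\alpha':=s/\|\mathbf{v}\|\ge s/\delta$ and applying Lemma \ref{lem5} with parameter $\alpha'$, and using that $12/\alpha'+1/(2\alpha'^2)$ is decreasing in $\alpha'$, the ratio is bounded by the single constant $M:=e^{12/(s/\delta)+1/(2(s/\delta)^2)}$ determined by $s/\delta$, except on a set of $\mathcal{D}_{s}^m$-measure at most $2^{-\omega(\log m)}$. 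Thus $\Pr_{\mathbf{z}\leftarrow\mathcal{D}_{s}^m}[\mathbf{z}\in C_{\mathbf{v}}]\le 2^{-\omega(\log m)}$ uniformly over $\mathbf{v}\in V$.

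With this in hand the conclusion follows by a direct computation. Because $p_{\mathcal{A}}\le p_{\mathcal{B}}$ pointwise, $\mathcal{A}$ outputs no more often than $\mathcal{B}$, so $\Pr[\mathcal{B}\to\bot]\le\Pr[\mathcal{A}\to\bot]$ and the real-output and abort contributions to the statistical distance coincide, giving
\begin{equation*}
\Delta(\mathcal{A},\mathcal{B}) = \sum_{\mathbf{v}}h(\mathbf{v})\sum_{\mathbf{z}\in C_{\mathbf{v}}}\left(\frac{\mathcal{D}_{s}^m(\mathbf{z})}{M}-\mathcal{D}_{\mathbf{v},s}^m(\mathbf{z})\right) \le \frac{1}{M}\sum_{\mathbf{v}}h(\mathbf{v})\,\Pr_{\mathbf{z}\leftarrow\mathcal{D}_{s}^m}[\mathbf{z}\in C_{\mathbf{v}}] \le \frac{2^{-\omega(\log m)}}{M}.
\end{equation*}
For the yield, summing over all outputs gives $\sum_{\mathbf{z},\mathbf{v}}p_{\mathcal{A}}=\sum_{\mathbf{z},\mathbf{v}}p_{\mathcal{B}}-\Delta(\mathcal{A},\mathcal{B})\ge 1/M-2^{-\omega(\log m)}/M=(1-2^{-\omega(\log m)})/M$, using $\sum_{\mathbf{v}}h(\mathbf{v})\sum_{\mathbf{z}}\mathcal{D}_{s}^m(\mathbf{z})/M=1/M$. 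The explicit ``in particular'' clause then follows by taking $s=\alpha\delta$ and instantiating Lemma \ref{lem5}/Remark \ref{rem2} with the constant $M=e^{12/\alpha+1/(2\alpha^2)}$ and the tail $2^{-100}$.

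I expect the main obstacle to be the uniform tail bound of the second paragraph in the asymptotic regime $s=\omega(\delta\log\sqrt{m})$: Lemma \ref{lem5} as stated yields only a fixed $2^{-100}$, whereas the general claim needs the clipping mass to be $2^{-\omega(\log m)}$. Closing this gap requires a sharper concentration estimate on $\langle\mathbf{z},\mathbf{v}\rangle$ (equivalently on $\|\mathbf{z}-\mathbf{v}\|^2-\|\mathbf{z}\|^2$) under $\mathbf{z}\leftarrow\mathcal{D}_{s}^m$, exploiting $s/\|\mathbf{v}\|=\omega(\log\sqrt{m})$ to push the failure probability below every inverse polynomial. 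A secondary, purely bookkeeping subtlety is the correct accounting of the abort symbol $\bot$ in $\Delta(\mathcal{A},\mathcal{B})$, which the pointwise domination $p_{\mathcal{A}}\le p_{\mathcal{B}}$ fortunately renders routine.
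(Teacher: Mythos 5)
This lemma is imported verbatim from \cite[Theorem 4.6]{Lyu12} and the paper supplies no proof of its own, so the only meaningful comparison is with the cited source: your argument --- the pointwise domination $p_{\mathcal{A}}(\mathbf{z}\mid\mathbf{v})\le p_{\mathcal{B}}(\mathbf{z}\mid\mathbf{v})$, the localisation of the discrepancy on the clipping set $C_{\mathbf{v}}$, and the resulting bounds on $\Delta(\mathcal{A},\mathcal{B})$ and on the output probability --- is precisely the standard proof given there, and it is correct. You also correctly identify the one point the paper's toolkit does not cover: Lemma \ref{lem5} only delivers the fixed $2^{-100}$ tail needed for the ``in particular'' clause (which is all the scheme actually uses), while the general $2^{-\omega(\log m)}$ bound requires the underlying concentration estimate on $\langle\mathbf{z},\mathbf{v}\rangle$ from \cite{Lyu12} (its Lemma 4.3), which is not restated here.
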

\noindent \textbf{Commitment Functions}. A commitment function
$\mathsf{com}$ maps a pair of two strings ($\mu$,$\mathbf{d}) \in \{0,1\}^* \times \{0,1\}^n$ 
(called \textit{committed string}) to a \textit{commitment string}  $C:=\mathsf{com}(\mu, \mathbf{d}) \in\{0,1\}^n$. 
We need \textsf{com} that is both 
 \textit{statistically hiding} and \textit{computationally binding}. 
 For more details, see \cite{Ruc10}, \cite{LDS20}.

\section{Framework of Forward-secure Blind Signatures}
In this section, we recap the syntax and the security model for
forward-secure blind signatures (\textsf{FSBS}). 
We follow \cite{DCK03}, which is in turn adapted from \cite{BM99}.
\subsection{Syntax of Forward-secure Blind Signature Schemes} \label{brs}
A forward-secure blind signature (or $\textsf{FSBS}$ for short) consists of 
the four algorithms \textsf{Setup}, \textsf{KeyUp}, \textsf{Sign}, and \textsf{Verify}. 
They are described as follows:
\begin{itemize}
	\item $(pp, pk,sk_0) \xleftarrow{\$} \textsf{Setup}(1^{n})$. 
	The algorithm is a PPT one that takes as input a security 
	parameter $n$ and 
	generates common parameters $pp$, a public key $pk$  and an initial secret key $sk_\epsilon$. 
	\item $sk_{t+1} \xleftarrow{\$} \textsf{KeyUp}(sk_{t}, t$): 
	The key update algorithm is a PPT one, which derives
	a secret key $sk_{t+1}$ for the time period $t+1$ from a secret key 
	$sk_{t}$ for a time period $t$. 
	After execution,
	the algorithm deletes the secret key  $sk_t$.
	
	\item $( \mathcal{V}, \Sigma) \xleftarrow{\$} \textsf{Sign}(pp, pk, sk_t,t,\mu)$:  
	The signing algorithm involves an interaction between a user, say $\mathcal{U}(pp, pk, t, \mu)$ 
	and a signer, say $\mathcal{S}(pp, pk, sk_t, t)$. 
	At a time period $t$, the user  blinds the message $\mu$ using the secret key $sk_t$ and sends
	it to the signer. The signer replies with a signature of the blinded message.
	After successful interactions, the user obtains a signature $\Sigma$ of the original message $\mu$
	at the time $t$. The signer gets its own view $\mathcal{V}$. 
	 If the interaction fails, the user and signer output $\Sigma:=\bot$ and $\mathcal{V}:=\bot$, respectively.
	\item $1/0 := \textsf{Verify}(pp, pk, t, \mu, \Sigma)$: 
	The verification algorithm is a deterministic one that outputs either $1$ if $\Sigma$ is non-$\bot$ and valid 
	or $0$, otherwise. As the input, it accepts a parameter $pp$, a public key $pk$, a time period $t$, 
	a message $\mu$ and a signature $\Sigma$.
\end{itemize}
\noindent
The correctness of \textsf{FSBS} is defined as follows.
For  any $ (pp, pk,sk_0) \leftarrow \textsf{Setup}(1^{n})$ and $ 
(\Sigma, \mathcal{V}) \leftarrow \textsf{Sign}(pp, pk,sk, t,\mu)$, 
the verification algorithm fails with a negligible probability or
\[
\Pr[\textsf{Verify}(pp,pk,t, \mu, \Sigma)=1]=1-\textsf{negl}(n).
\]
\subsection{Security of Forward-secure Blind Signatures} \label{sfs}
Two properties required for forward-secure blind signatures
are \textit{blindness} and  \textit{forward security}. 
Blindness ensures that it is impossible for the signer to learn 
any information about messages being signed. 

\begin{definition}[Blindness] \label{bli}
$\mathsf{FSBS}$ is \textit{blind} if for any efficient algorithm  $\mathcal{S}^*$,
the advantage of $\mathcal{S}^*$ in the blindness game $\mathsf{Blind}_{\mathsf{FSBS}}^{\mathcal{S}^*}$ is negligible. 
That is  
\[
	\Adv^{\mathsf{Blind}}_{\mathsf{FSBS}}(\mathcal{S}^*):=\Pr[\mathsf{Blind}_{\mathsf{FSBS}}^{\mathcal{S}^*}
	\Rightarrow 1]-1/2 \leq \mathsf{negl}(n).
\]
$\mathsf{FSBS}$ is called \textit{perfectly blind} if $\Pr[\mathsf{Blind}_{\mathsf{FSBS}}^{\mathcal{S}^*}\Rightarrow 1]$ is exactly $1/2$. 
\end{definition}
\noindent
The blindness game $\textsf{Blind}_{\textsf{FSBS}}^{\mathcal{S}^*}$ consists of three phases defined below.

\begin{enumerate}
\item  \textbf{Initialization.} The adversary $\mathcal{S}^*$ chooses a security parameter $n$, 
	then obtains common parameters $pp$, a public key $pk$ and an initial secret key $sk_0$ using \textsf{Setup}$(1^{n})$. 
\item \textbf{Challenge.} $\mathcal{S}^*$ selects and gives  the challenger $\mathcal{C}$ two messages $\mu_0$ and  $\mu_1$. The challenger $\mathcal{C}$ flips a coin $b \in \{0, 1\}$ and initiates two signing interactions with  $\mathcal{S}^*$ on input $\mu_b$ and $\mu_{1-b}$ (not necessarily in two different time periods). The adversary $\mathcal{S}^*$ acts as the signer in these two interactions and finally attains two corresponding view/signature pairs $(\mathcal{V}_b,\Sigma_b)$ and $(\mathcal{V}_{1-b},\Sigma_{1-b})$.
	\item \textbf{Output.} The adversary $\mathcal{S}^*$ outputs $b' \in \{0,1\}$. It wins if $b'=b$.
	
\end{enumerate}

\noindent
Following \cite{DCK03}, we define forward-security 
as the \textit{forward-secure unforgeability}.  
In the $\textsf{FSUF}_{\textsf{FSBS}}^{\mathcal{U}^*}$ game, 
the forger $\mathcal{U}^*$ is a malicious user (adversary).
 
 	\begin{definition}[Forward-secure Unforgeability]   \label{on} 
 	$\mathsf{FSBS}$ is forward-secure unforgeable $(\mathsf{FSUF})$ if for any efficient algorithm  $\mathcal{U}^*$, the advantage of $\mathcal{U}^*$ in the forward-secure unforgeability game $\mathsf{FSUF}_{\mathsf{FSBS}}^{\mathcal{U}^*}$ is negligible. That is,
 	$$\Adv^{\mathsf{FSUF}}_{\mathsf{FSBS}}(\mathcal{U}^*):=\Pr[\mathsf{FSUF}_{\mathsf{FSBS}}^{\mathcal{U}^*}\Rightarrow 1] \leq \mathsf{negl}(n).$$ 
 	
 \end{definition}

 In our work, the forward-secure unforgeability game $\textsf{FSUF}_{\textsf{FSBS}}^{\mathcal{U}^*}$ in defined in the random oracle model. (We use hashing as an instantiation of random oracle.)
 We assume that, whenever the adversary wants to make a signing query, it always makes a random oracle query in advance.
\begin{enumerate}
\item \textbf{Setup.} The forger $\mathcal{U}^*$ gives a security parameter $n$ to the challenger $\mathcal{C}$. 
	The challenger $\mathcal{C}$ generates system parameters $pp$ and outputs the key pair $(pk, sk_0)$ 
	by calling \textsf{Setup}$(1^{n})$. 
	Then $\mathcal{C}$ sends $pp$ and  $pk$ to the forger $\mathcal{U}^*$. The key $sk_0$ is kept secret.
\item \textbf{Queries.} 
	At a time period $t$, the forger $\mathcal{U}^*$ can make a polynomially many random oracle queries 
	as well as a polynomially many signing queries in an adaptive manner.  
	In order to move to the next time period, the forger makes a key update query to get the secret key $sk_{t+1}$
	for the time period {$t$+1}.  
	Note that, once the forger makes a key update query, 
	i.e., it obtains the secret key $sk_{t+1}$, it cannot issue random oracle and signing queries 
	for past time intervals.
	Finally, the forger is allowed to make a single break-in query  
	at a time period $\overline{t}\leq T-1$, when it wants to stop the query phase.
	The time interval $\overline{t}$ is called the \textit{break-in time}.
	Once the forger makes the break-in query, it is not able to make further 
	random oracle (or hash) and signing queries.  
	Details of the challenger actions in response to the forger queries are given below.
	 \begin{itemize}
	\item For key update query $KQ(t)$: if $t <T-1$, then the challenger updates the secret key $sk_t$ to $sk_{t+1}$ 
	and updates $t$ to $t+1$. If $t=T-1$ then $sk_T$ is given as an empty string. 
	\item For each hash queries $HQ(t,\mu)$: the challenger has to reply with a random value. 
	\item For each signing query $SQ(t,\mu)$: the challenger must send a valid signature back to $\mathcal{U}^*$.
	\item For the break-in query $BQ(\overline{t})$ (note that the query is allowed once only): 
	the challenger must send  the secret key $sk_{\overline{t}}$ to the adversary and move the game to the output phase. 
	\end{itemize}
\item \textbf{Output.} $\mathcal{U}^*$ outputs at least one forgery  $ (\mu^*, t^*, \Sigma^*)$ at time period  $t^*$. He wins the game if $t^*< \overline{t}$, $SQ(t^*,\mu^*)$ has been never queried, and $(\mu^*, t^*, \Sigma^*)$ is valid.	
\end{enumerate}

\section{Our Construction}

\subsection{Binary Tree Hierarchy for Time Periods} \label{binarytree}
Our design applies a binary-tree data structure. In the context of encryption, binary trees have been introduced by \cite{CHK03}.
For the lattice setting, they have been adapted by Cash et al. in  \cite{CHKP10}.
The tree structure is useful for constructing forward-secure public key encryption schemes  \cite{CHK03},
HIBE \cite{CHKP10} and recently for forward-secure group signature \cite{LNWX19}. 
We need time periods $t \in \{0, \cdots, 2^\ell-1 \}$ to be assigned to leaves of a binary tree of the depth $\ell$.
The tree leaves are arranged in increasing order from left to right -- see Figure \ref{fig2}.  
For a time period $t$, there is a unique path $t = (t_1, \cdots, t_\ell)$ from the root $\epsilon$ to the leaf,
where for each level $i\in [\ell]$, $t_i = 0$ if this is the left branch or $t_i = 1$ if this is the right branch.
Consequently, the $i$-th level node $w^{(i)}$ in the binary  tree can be described by
a unique binary bit string $w^{(i)} = (w_1, \cdots, w_i)$ that follows the path from the root to the node. 
This means that for the node $w^{(i)} =( w_1, \cdots, w_i)$, we can create
a corresponding matrix $W_t=[A_0\|A_1^{(w_1)}\| \cdots \|A^{(w_{i})}_{i}]$
(resp., $F_t=[A_0\|A_1^{(t_1)}\| \cdots \|A^{(t_{\ell})}_{\ell}]$), where $A_0$ and its ascociated  trapdoor $ T_{A_0}$ are generated by $\textsf{TrapGen}$ 
and $A_i^{(b)}$ are random matrices for all $i\in [\ell], b\in \{0,1\}$. 

Updating secret keys from time period $t$ to $t+1$ is done 
by the trapdoor delegation mechanism using \textsf{ExtBasis}. 
Each node $w^{(i)}=(w_1,\cdots, w_i)$ is associated with a secret key $T_{w^{(i)}}$, 
which can be computed from the initial secret key $sk_0=T_{A_0}$ by evaluating
\[
T_{w^{(i)}} \leftarrow \ExtBasis( A_{w^{(i)}}, T_{A_0}), \text{ where } A_{w^{(i)}} = \left[ A_0 \|A_1^{(w_1)} \|A_2^{(w_2)}\| \cdots \|A_i^{(w_i)} \right].
\]
$T_{w^{(i)}}$ is easily computed if a secret key $T_{w^{(k)}}$ for an ancestor $w^{(k)}$ of $w^{(i)}$ is known. 
Assume that the binary representation of $w^{(i)}$ is $w^{(i)} =( w_1, \cdots, w_k, w_{k+1}, \cdots, w_i)$,  
where $k<i$. 
Then 
$$T_{w^{(i)}} \leftarrow \ExtBasis( A_{w^{(i)}}, T_{w^{(k)}}), \text{ where } A_{w^{(i)}} = \left[ A_0 \|A_1^{(w_1)}\|\cdots \|A_2^{(w_k)}\| \cdots \|A_i^{(w_i)} \right].$$
Similarly, a secret key for a time period (i.e., a leaf) can be computed if we have any its ancestor's secret key.

\subsection{Description of the Proposed Signature}\label{forwardscheme}
Our lattice-based forward-secure blind signature (\textsf{FSBS}) consists of 
a setup algorithm \textsf{Setup}, a key update algorithm \textsf{KeyUp}, an interactive signing algorithm \textsf{Sign} 
and a verification algorithm \textsf{Verify}.  
They all are described below. Note that, we also use a commitment function \textsf{com}.

\begin{description}	 
\item \underline{\textsf{Setup}($1^n, 1^{\ell}$)}: 
	For a security parameter $n$ and a binary tree depth $\ell$, the algorithm runs through the following steps.
	\begin{itemize}
	\item Choose $q=poly(n)$ prime, $m=O(n\log q)$, $k$, $\kappa$, $\ell$, $\tau=2^\ell$, $\sigma$, $\sigma_1$, $\sigma_2$, $\sigma_3$ 
		(see Section \ref{paraset} for details).
	\item Let $\mathcal{M} = \{0,1\}^*$ be the message space of the scheme.
	\item Choose randomly a matrix  $K \xleftarrow{\$}\mathbb{Z}_q^{n\times k}$. Similarly, select 
		matrices $A_1^{(0)}, A_1^{(1)}, $ $A_2^{(0)}, A_2^{(1)}, \cdots,$ $ A_{\ell}^{(0)}, A_{\ell}^{(1)}$  from $\mathbb{Z}_q^{n\times m}$
		at random.
	\item Run $\TrapGen(q,n)$ to obtain a pair $(A_0, T_{A_0})$, 
		where $A_0 \in \ZZ_q^{n\times m}$ and $T_{A_0}\in\ZZ^{m\times m}$ are a matrix and its associated trapdoor.  
	\item Let $H: \{0,1\}^* \rightarrow \cal{R}_H$ be a collision-resistant and one-way hash function, where $\cal{R}_H:=\{\bf{e}' \in \{-1,0,1\}^k: \Vert \bf{e}' \Vert \leq \kappa \}$.
	\item Let $\mathsf{com}: \{0,1\}^* \times \{0,1\}^n \rightarrow \{0,1\}^n$ be a computationally binding and statistically hiding commitment function.
	\item Output $pp\leftarrow\{n,q, m,\ell,\tau, k, \kappa,\sigma, \sigma_1$, $\sigma_2, \sigma_3, \mathcal{M}, H, \textsf{com}\}$, $pk \leftarrow \{A_0, A_1^{(0)},$ $ A_1^{(1)},$ $ \cdots,$ $ A_\ell^{(0)}, A_\ell^{(1)}, K\}$, and $sk_{\epsilon} \leftarrow T_{A_0}$ as common parameters, public key and the initial secret key, respectively.
	\end{itemize} 
\item  \underline{$ \textsf{KeyUp}(pp,pk,sk_t,t)$}: 
	We need a key evolution mechanism (\textsf{KVM}) that ``forgets'' all secret keys 
	of internal nodes that can produce past keys.
	Additionally, we expect that \textsf{KVM} stores the smallest number of keys necessary for signature to work properly.
	The key evolution mechanism \textsf{KVM} works as follows.
   	\begin{itemize}
  	 \item 
   	For any leaf $t$, define the minimal cover $\textsf{Node}(t)$ to be 
	the smallest subset of nodes that contains an ancestor of all leaves in $\{t, \cdots , T-1\}$ but does not contain any
   	ancestor of any leaf in $\{0, \cdots , t-1\}$.   
   	For example, in Figure \ref{fig2}, $\textsf{Node}(0)=\{\epsilon\}$, $\textsf{Node}(1)=\{001, 01, 1\}$, 
	$\textsf{Node}(2)=\{01, 1\}$, $\textsf{Node}(3)=\{011, 1\}$ (i.e., two black circles in the tree), $\textsf{Node}(4)=\{1\}$,  
	$\textsf{Node}(5)=\{101, 11\}$, $\textsf{Node}(6)=\{ 11\}$, $\textsf{Node}(7)=\{111\}$.
	\item The secret key $sk_t$ at time period $t$ contains secret keys corresponding to all nodes 
	(including leaves) in  $\textsf{Node}(t)$. 
	For example, for the tree from Figure \ref{fig2},
	we have 
	$sk_0=sk_{\epsilon}=\{T_{A_0}\}$, $sk_1=\{T_{001}, T_{01}, T_{1} \}$, 
	where $T_{001}, T_{01}$, and $T_{1}$ are associated trapdoors for 
	$F_{001}=[A_0\|A_1^{(0)}\|A_{2}^{(0)}\|A_3^{(1)}]$, $F_{01}=[A_0\|A_1^{(0)}\|A_{2}^{(1)}]$ and $F_{1}=[A_0\|A_1^{(1)}]$, respectively.
	\item 
	To update $sk_t$ to $sk_{t+1}$, the signer determines the minimal cover $\textsf{Node}(t+1)$, 
	then derives keys for all nodes in $\textsf{Node}(t+1) \setminus \textsf{Node}(t)$ using the keys in $sk_t$ 
	as described in Section \ref{binarytree}.
	Finally the signer deletes all keys in $\textsf{Node}(t) \setminus \textsf{Node}(t+1)$. 
	For example, $sk_2=\{T_{01}, T_1\}$ (mentioned above), 
	since $\textsf{Node}(2) \setminus \textsf{Node}(1)=\{01,1\}$ and $\textsf{Node}(1) \setminus \textsf{Node}(2)=\{001\}$.
   	\end{itemize}
\begin{figure}
	\centering
\begin{tikzpicture}[->,>=stealth',level/.style={sibling distance = 5cm/#1,level distance = 1cm}] 
\node [arn_r] (O){$\epsilon$}
    child{ node [arn_r] (A) {0} 
            child{ node [arn_r] {00} 
            	child{ node [arn_r] (B) {000} }
                child{ node [arn_r] (C) {001}}
            }
            child{ node [arn_r] {01}
							child{ node [arn_r] (D) {010}}
							child{ node [arn_n] (E) {011}}
            }                            
    }
    child{ node [arn_n] {1}
            child{ node [arn_r] {10} 
							child{ node [arn_r] (F) {100}}
							child{ node [arn_r] (G) {101}}
            }
            child{ node [arn_r] {11}
							child{ node [arn_r] (H) {110}}
							child{ node [arn_r] (I) {111}}
            }
		}
; 
\node at (-4.6,-3.7) {$t=0$};
\node at (-3,-3.7) {$t=1$};
\node at (-2,-3.7) {$t=2$};
\node at (-0.5,-3.7) {$t=3$};
\node at (0.5,-3.7) {$t=4$};
\node at (2,-3.7) {$t=5$};
\node at (3,-3.7) {$t=6$};
\node at (4.6,-3.7) {$t=7$};

\node at (6,0) {level $0$};
\node at (6,-1) {level $1$};
\node at (6,-2) {level $2$};
\node at (6,-3) {level $3$};
\end{tikzpicture}
	\caption{Binary tree of depth $\ell=3$, i.e., for $\tau=8$ time periods. The root is denoted by $\epsilon$. 
	For convenience, we name nodes by their binary representations}
	\label{fig2}
\end{figure}
\item \underline{$\textsf{Sign} (pp,pk, sk_t,t,\mu)$}:  
The signer interacts with the user in order 
to produce a signature for a message $\mu \in \cal{M}$ at time period $t$.
The interaction consists of five phases. Phases 1,3 and 5 are done by the signer.
Phases 2 and 4 -- by the user.
	\begin{itemize}
	\item \textit{Phase 1}: 
		The signer constructs the matrix $F_t=\left[A_0\| A_1^{(t_1)} \| \cdots \|A_\ell^{(t_\ell)}\right] \in \ZZ_q^{n \times (\ell+1) m}$
		for the time $t= (t_1, \cdots, t_\ell)$.
		Next it computes an ephemeral secret key $S_t$ using \textsf{SampleKey} described in 
		Theorem \ref{thm:Gauss}, where $F_t\cdot S_t=K$. 
		Note that $S_t$ can be computed at Phase 3 as well.
		The signer samples $\bf{r} \in \mathbb{Z} ^{(\ell+1)m}$    
		according to the distribution $\cal{D}^{(\ell+1)m}_{\sigma_2}$. It finally 
		computes and sends  $\mathbf{x}=F_t\mathbf{r} \in \ZZ_q^{n}$ to the user.
	\item \textit{Phase 2}: 
		Upon receiving $\textbf{x}$, the user samples blind factors 
		$\mathbf{a} \leftarrow \mathcal{D}_{\sigma_3}^{(\ell+1)m}$ and 
		$\mathbf{b} \leftarrow \mathcal{D}_{\sigma_1}^k$,  $\mathbf{d}' \xleftarrow{\$} \{0,1\}^n$. 
		It computes $\mathbf{u}=\mathbf{x}+F_t\mathbf{a} +K\mathbf{b}$ and 
		hashes it with $\bf{c}:=\mathsf{com}(\mu, \mathbf{d}') \in \{0,1\}^n$ 
		using the hash function $H$ to obtain a \textit{real challenge} $\mathbf{e}'$. 
		The rejection sampling technique is called
		 to get the \textit{blinded challenge} $\mathbf{e}$, which is sent back to the user. 
	\item \textit{Phase 3}: 
		The ephemeral secret key $S_t$ and $\textbf{r}$ are used to compute $\mathbf{z}=\mathbf{r}+S_t \mathbf{e}$. 
		In order to guarantee that no information of $S_t$ is leaked, 
		the rejection sampling is applied, which implies that
		the distribution of $\mathbf{z}$ and $\mathbf{r}$ are the same. 
		Finally, the \textit{blinded signature}  $\mathbf{z}$ is delivered to the user. 
	\item \textit{Phase 4}:  The user computes the unblinded signature $\mathbf{z}'=\mathbf{z}+\mathbf{a}$. 
		Again, the rejection sampling is called to make sure that $\mathbf{z}'$ and $\mathbf{z}$ 
		are independent of each other and $\mathbf{z}'$ is bounded in some desired domain.
		The user returns $(t,\mu, \Sigma=( \mathbf{d}', \mathbf{e}', \mathbf{z}'))$ as the \textit{final signature} 
		if $\Vert \mathbf{z}' \Vert \leq  \sigma_3\sqrt{(1+\ell)m} $ holds. Otherwise, he outputs ``$\bot$''. 
		The user is required to confirm validity of the final signature by sending $\textsf{result}$ 
	 	to the signer: $\textsf{result}:=\textsf{accept}$ means the final signature is good, 
		while $ \textsf{result}:=(\mathbf{a}, \mathbf{b}, \mathbf{e}', \bf{c})$) requires the user to
		restart the signing protocol.   	
	\item \textit{Phase 5}: Having obtained  
		$\textsf{result}$, the signer checks whether or not $\textsf{result}\neq \textsf{accept}$. 
		If not, it returns the \textit{view}  $\mathcal{V}=(t,\mathbf{r},\mathbf{e}, \mathbf{z})$.
		Otherwise it makes some check-up operations before restarting the signing algorithm. 
		The check-up allows the signer to detect an adversary who controls the user and tries to
		forge a signature.
	\end{itemize} 
Note that the rejection sampling in Phase 2 is not able restart the signing algorithm as it is used locally. 
In contrast, the rejection sampling in Phase 3 and Phase 4 can make the signing algorithm restart. 
The reader is referred  to Section \ref{sec5} for more details. 
Figure \ref{fig1} illustrates the signing algorithm.
		\begin{figure*}[h]
		\centering
		\medskip
		\smallskip
		\raisebox{\dimexpr 0.6\baselineskip-\height}

		\small\addtolength{\tabcolsep}{4pt}
		
		\begin{tabular}{|  l | l | }
		
			\hline
			&\\
			\underline{SIGNER $\mathcal{S}(pp,pk, sk_t,t)$:} & \underline{ USER $\mathcal{U}(pp, pk, t,\mu):$}\\
		
			&\\
			\textbf{\underline{Phase 1:}}& \textbf{\underline{Phase 2:}} \\
			01. $F_{t}:=\left[A_0\| A_1^{(t_1)} \| \cdots \|A_\ell^{(t_\ell)}\right] \in \ZZ_q^{n \times (\ell+1) m}$
			&05. $F_{t}:=\left[A_0\| A_1^{(t_1)} \| \cdots \|A_\ell^{(t_\ell)}\right]$ \\
			02. $S_t \in \mathbb{Z} ^{(\ell+1)m\times k} \leftarrow \textsf{SampleKey}(F_t,T_{F_t}, \sigma, K)$&06. $\mathbf{a} \xleftarrow{\$} \cal{D}^{(\ell+1)m}_{\sigma_3}$, $\bf{b} \xleftarrow{\$} \cal{D}^{k}_{\sigma_1}$	\\
			\hspace{0.5cm}	(i.e., $F_t\cdot S_t=K \text{ (mod } q)$)&07. $\mathbf{d}' \xleftarrow{\$} \{0,1\}^{n}$, $\bf{c}:=\textsf{com}(\mu,\mathbf{d})$,	\\
		03. $\bf{r} \in \mathbb{Z} ^{(\ell+1)m}\xleftarrow{\$} \cal{D}^{(\ell+1)m}_{\sigma_2}$, $\mathbf{x}=F_t\mathbf{r} \in \ZZ_q^{n}$&	\hspace{0.4cm}$\mathbf{u}=F_t\mathbf{a}+\mathbf{x}+K\mathbf{b} \text{ (mod } q)$\\
		
		04. Send $\mathbf{x}$ to the user & 08. $\mathbf{e}'=H(\mathbf{u}, \bf{c}) \in \cal{R}^k_{H}$, $\mathbf{e}:=\mathbf{e}'+\mathbf{b}$\\
	\hspace{0.5cm}	[\textbf{Go to Phase 2}]	&09. Output $\mathbf{e}$ with probability 	\\
			
			\textbf{\underline{Phase 3:}} & \hspace{1.5cm} min$ \left\{ \frac{\mathcal{D}_{\sigma_1}^m(\mathbf{e})}{M_1 \cdot \mathcal{D}^m_{\sigma_1,\mathbf{e}'}(\mathbf{e})},1 \right\}$\\
			11. $\mathbf{z}=\mathbf{r}+S_t\mathbf{e}$ &  10. Send $\mathbf{e}$ back to the signer.\\
			12. Output $\mathbf{z}$ with probability & \hspace{0.5cm}  [\textbf{Go to Phase 3}]\\
			 \hspace{2cm} min$ \left\{ \frac{\mathcal{D}_{\sigma_2}^{(\ell+1)m}(\mathbf{z})}{M_2 \cdot \mathcal{D}^{(\ell+1)m}_{\sigma_2,S_t\mathbf{e}}(\mathbf{z})},1 \right\}$ &\textbf{\underline{Phase 4:}}\\
					13. Send $\mathbf{z}$ to the user & 14. $\mathbf{z}'=\mathbf{z}+\mathbf{a}$\\

		 \hspace{0.5cm}[\textbf{Go to Phase 4}]&15.  Output $\mathbf{z}'$ with probability \\
			\textbf{\underline{Phase 5:}}	&\hspace{1.5cm} min$ \left\{ \frac{\mathcal{D}_{\sigma_3}^{(\ell+1)m}(\mathbf{z}')}{M_3 \cdot \mathcal{D}^{(\ell+1)m}_{ \sigma_3,\mathbf{z}}(\mathbf{z}')},1 \right\}$\\
	18. \textbf{if} (\textsf{result} $\neq$ \textsf{accept}):	&i.e.,  \textbf{if} ($\Vert\mathbf{z}' \Vert < \sigma_3\sqrt{(\ell+1)m}):$   \\
		19. \hspace{0.5cm} Parse \textsf{result} $:= (\mathbf{a}, \mathbf{b}, \mathbf{e}', \bf{c})$	& \hspace{1cm}  \textsf{result} $:= $ \textsf{accept} \\

			20.			\hspace{0.5cm}
			$\mathbf{u}:=F_t\mathbf{a}+\mathbf{x}+ K\mathbf{b}$ (mod $q$)	& 
			\hspace{0.5cm}\textbf{else}:  \textsf{result} $:= (\mathbf{a}, \mathbf{b}, \mathbf{e}', \bf{c})$\\
		\hspace{1.1cm}$\widehat{\mathbf{u}}:=F_t\mathbf{a}+F_t\mathbf{z}- K\mathbf{e}'$ (mod $q$)&16. \textbf{Output:} ($t, \mu$, $ \Sigma=(\mathbf{d}',\mathbf{e}', \mathbf{z}'))$\\

			21.		\hspace{0.5cm} \textbf{if} ($\mathbf{e}-\mathbf{b}=\mathbf{e}'=H(\mathbf{u}, \bf{c})$ & \hspace{1cm} or $\bot$ when \textsf{result }$\neq$ \textsf{accept}  \\
			  	\hspace{1.5cm} and $\mathbf{e}'=H(\widehat{\mathbf{u}}, \bf{c})$ & 17. Send $\textsf{        result      }$ back to the signer. \\

			\hspace{1.5cm}  and 	$\Vert \mathbf{z}+\mathbf{a}\ \Vert \geq \sigma_3\sqrt{(\ell+1)m})$:  & \hspace{0.5cm} [\textbf{Go to Phase 5}] \\

				\hspace{2cm}   restart from Phase 1 &\\
		
			22.		\textbf{Output:} the view $\mathcal{V}=(t, \mathbf{r},\mathbf{e}, \mathbf{z})$& \\
			\hline
		\end{tabular}
		
		\medskip
		\caption{The signing algorithm \textsf{Sign}($pp, pk, sk_t,t,\mu$)} 
		\label{fig1}
	\end{figure*}
\item \underline{\textsf{Verify}($t, pk, \mu, \Sigma$)}:  
	The algorithm accepts 
	a signature $\Sigma$ on the message $\mu$ for the time period $t= (t_1, \cdots, t_\ell)$ and
	public key $pk$ as its input and performs the following steps:\\
	(i) parse $\Sigma=(\mathbf{d}',\mathbf{e}', \mathbf{z}' )$;\\
	(ii) form $F_{t}:=\left[A_0\| A_1^{(t_1)} \| \cdots \|A_\ell^{(t_\ell)} \right] \in \ZZ^{n \times (1+\ell) m};$\\
	(iii) compute $\widehat{\mathbf{e}}:=H(F_t\mathbf{z'}- K\mathbf{e}' \text{ mod } q, \textsf{com}(\mu,\mathbf{d}'))$;\\
	(iv) if $\|\mathbf{z'}\| \leq \sigma_3\sqrt{(1+\ell)m}$ and $\widehat{\mathbf{e}}=\mathbf{e}'$,
	  then output 1, otherwise return 0. 
\end{description}

\section{Correctness, Security  and Parameters for \textsf{FSBS}}  \label{sec5}


\subsection{Correctness}  \label{correct}

\begin{theorem}[\textbf{Correctness}] \label{theo3} 
The correctness of $\mathsf{FSBS}$ scheme holds after at most $e^2$ restarts 
with probability not smaller than $1-2^{-100}$.
\end{theorem}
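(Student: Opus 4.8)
The plan is to split the correctness claim into two independent pieces: a \emph{deterministic} verification identity showing that every non-$\bot$ output of the signing protocol passes \textsf{Verify}, and a \emph{probabilistic} restart analysis bounding the number of repetitions and the success probability through the rejection-sampling lemma (Lemma \ref{lem3}).

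First I would establish the verification identity. Let $\Sigma=(\mathbf{d}',\mathbf{e}',\mathbf{z}')$ be a non-$\bot$ output. Tracing the five phases, $\mathbf{z}'=\mathbf{z}+\mathbf{a}=\mathbf{r}+S_t\mathbf{e}+\mathbf{a}$ with $\mathbf{e}=\mathbf{e}'+\mathbf{b}$ and $\mathbf{x}=F_t\mathbf{r}$. Using $F_tS_t=K \pmod q$, which is guaranteed by \textsf{SampleKey} (Theorem \ref{thm:Gauss}), I compute
\[
F_t\mathbf{z}'-K\mathbf{e}' = F_t\mathbf{r}+F_tS_t\mathbf{e}+F_t\mathbf{a}-K\mathbf{e}' = \mathbf{x}+K(\mathbf{e}'+\mathbf{b})+F_t\mathbf{a}-K\mathbf{e}' = \mathbf{x}+F_t\mathbf{a}+K\mathbf{b} = \mathbf{u} \pmod q.
\]
Hence $\widehat{\mathbf{e}}=H(F_t\mathbf{z}'-K\mathbf{e}',\mathsf{com}(\mu,\mathbf{d}'))=H(\mathbf{u},\mathbf{c})=\mathbf{e}'$, exactly the real challenge fixed in Phase 2. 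The remaining verification condition $\|\mathbf{z}'\|\le\sigma_3\sqrt{(1+\ell)m}$ is precisely the predicate the user tests before emitting a non-$\bot$ signature in Phase 4, so it holds by construction. Thus \textsf{Verify} never rejects a genuinely completed signature.

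Next I would bound the restarts. Only the rejection samplings of Phase 3 (on $\mathbf{z}$, constant $M_2$) and Phase 4 (on $\mathbf{z}'$, constant $M_3$) can trigger a restart; the Phase 2 step is local. Choosing $\sigma_2,\sigma_3$ so that each step runs with $\alpha=12$ (cf.\ Remark \ref{rem2}), Lemma \ref{lem3} gives $M_2,M_3\le e^{1+1/288}$ and guarantees that each step produces an output with probability at least $(1-2^{-100})/M_i$, while the output law is within statistical distance $2^{-100}/M_i$ of the target $\mathcal{D}_{\sigma_i}$. A full run completes exactly when both accept, which happens with probability at least $(1-2^{-100})^2/(M_2M_3)$, so the expected number of repetitions is at most $M_2M_3\le e^{2+1/144}$, i.e.\ essentially $e^2$; conditioned on completion, the emitted signature verifies with probability at least $1-2^{-100}$, the loss stemming only from the statistical-distance terms of Lemma \ref{lem3}. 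Combining this with the deterministic identity yields the theorem.

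I expect the main obstacle to be the probabilistic bookkeeping of the second step: one must check that the parameter set of Section \ref{paraset} actually forces $M_2,M_3$ down to $e^{1+1/288}$ (so their product stays at $e^2$ up to a negligible correction) while keeping $\sigma_3\sqrt{(1+\ell)m}$ large enough that the tail bound of Theorem \ref{thm:Gauss}(1) lets the Phase 4 norm test succeed except with probability $2^{-100}$. The algebraic identity is routine once the substitutions are tracked modulo $q$; the delicate part is arguing that the acceptance events across phases are (near-)independent so the $1/e^2$ product bound is legitimate, and collecting all the $2^{-100}$ statistical-distance contributions so that their aggregate does not erode the overall $1-2^{-100}$ guarantee.
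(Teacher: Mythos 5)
Your proposal is correct and follows essentially the same route as the paper: the explicit substitution showing $F_t\mathbf{z}'-K\mathbf{e}'=\mathbf{u}\pmod q$ is exactly the (abbreviated) verification identity in the paper's proof, and the restart bound via Remark~\ref{rem2} forcing $M_2,M_3\approx e^{1+1/288}$ so that at most $M_2M_3\approx e^2$ repetitions are needed is the paper's argument verbatim. Your added care about the norm test in Phase 4, the independence of the acceptance events, and the aggregation of the $2^{-100}$ terms only makes explicit what the paper leaves implicit.
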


\begin{proof} 
Given $(t,\mu,\Sigma= (\mathbf{d}', \mathbf{e}', \mathbf{z}'))$
produced by \textsf{Sign}($pp,pk, sk_t,\mu$) -- see Figure \ref{fig1}.
It is east to show that 
$H(F_t\mathbf{z}' -K\mathbf{e}'\text{ (mod } q), \mathsf{com}(\mu, \mathbf{d}'))=\mathbf{e}'$. 
Note that $\Vert \mathbf{z}'\vert \leq \sigma_3\sqrt{(1+\ell)m}$ with overwhelming probability 
by Statement 1 of Theorem \ref{thm:Gauss}. 
Remark \ref{rem2} implies that if $s=12\Vert  \mathbf{c}\Vert$, 
then 
$	\frac{\mathcal{D}_{s}^m(\mathbf{x})}{M \cdot \mathcal{D}^m_{s, \mathbf{c}}(\mathbf{x})} \leq \frac{e^{1+1/288}}{M}$
with probability at least $1-2^{-100}$.
The rejection sampling requires that $\mathcal{D}_{s}^m(\mathbf{x})/(M \cdot \mathcal{D}^m_{s,\mathbf{c}}(\mathbf{x}))\leq 1$, 
meaning that $M \geq e^{1+1/288}$. It is easy to see that $M\approx e^{1+1/288}$ is the best choice. 
Applying this observation to the rejection samplings in Phases 3 and 4, 
we  see that a valid signature can be successfully produced after at most $M_2\cdot M_3 \approx e^2$ repetitions.
\end{proof}

\subsection{Blindness}

\begin{theorem}[Blindness] \label{blindness} 
Let  $\mathsf{com}$ be a statistically hiding commitment and $H$ be an one-way and collision-resistant hash function.  
Then, the proposed forward-secure blind signature $\mathsf{FSBS}$ is blind.
\end{theorem}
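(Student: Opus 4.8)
The plan is to show that the adversary's complete view in the game $\mathsf{Blind}_{\mathsf{FSBS}}^{\mathcal{S}^*}$ — the two interaction transcripts it sees while acting as the signer, together with the two final signatures it is handed — has a distribution that is statistically independent of the challenge bit $b$. A successful interaction leaves the signer only with the view $\mathcal{V}=(t,\mathbf{r},\mathbf{e},\mathbf{z})$, equivalently $(\mathbf{x},\mathbf{e},\mathbf{z})$ with $\mathbf{x}=F_t\mathbf{r}$, while the user privately outputs $\Sigma=(\mathbf{d}',\mathbf{e}',\mathbf{z}')$. Since the signatures are labelled by message (the adversary can re-run $\mathsf{Verify}$), it suffices to prove that, per session, the pair (transcript, signature) carries no information that links a session to the message signed in it; then the two pairings corresponding to $b=0$ and $b=1$ are equiprobable and $\Pr[\mathsf{Blind}_{\mathsf{FSBS}}^{\mathcal{S}^*}\Rightarrow 1]=1/2+\mathsf{negl}(n)$.

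First I would isolate where the message $\mu$ enters a session. It appears only inside the commitment $\mathbf{c}=\mathsf{com}(\mu,\mathbf{d}')$, and through $\mathbf{c}$ into the real challenge $\mathbf{e}'=H(\mathbf{u},\mathbf{c})$. Because $\mathbf{d}'$ is sampled uniformly from $\{0,1\}^n$ and $\mathsf{com}$ is statistically hiding, the distribution of $\mathbf{c}$ is within negligible statistical distance of a distribution that does not depend on $\mu$; hence so is everything derived from it. Formally I would replace $\mathbf{c}=\mathsf{com}(\mu,\mathbf{d}')$ by a message-independent commitment, paying only the hiding distance, and this is the single place the assumption on $\mathsf{com}$ is used (collision-resistance and one-wayness of $H$ enter only to guarantee that $\mathbf{e}'$ is consistently determined and that valid signatures cannot be re-interpreted).

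Next I would exploit the three rejection-sampling steps to fix the output distributions via Lemma \ref{lem3}. The blinded challenge $\mathbf{e}$ that the signer receives is statistically close to $\mathcal{D}_{\sigma_1}^{k}$ independently of its center $\mathbf{e}'$; the value $\mathbf{z}=\mathbf{r}+S_t\mathbf{e}$ is statistically close to $\mathcal{D}_{\sigma_2}^{(\ell+1)m}$ independently of $S_t\mathbf{e}$; and the unblinded $\mathbf{z}'=\mathbf{z}+\mathbf{a}$ is statistically close to $\mathcal{D}_{\sigma_3}^{(\ell+1)m}$ independently of $\mathbf{z}$. The decisive consequence is that, conditioned on a fixed transcript, the probability of emitting a given signature component $\mathbf{z}'$ equals $\mathcal{D}_{\sigma_3}^{(\ell+1)m}(\mathbf{z}')$ with no dependence on $\mathbf{z}$, and symmetrically the transcript component $\mathbf{e}$ has density $\mathcal{D}_{\sigma_1}^{k}(\mathbf{e})$ with no dependence on $\mathbf{e}'$.

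Finally I would assemble a measure-preserving bijection on the users' randomness that exchanges the two pairings. Fixing the signer's coins and an observed view $(\mathsf{tr}_1,\mathsf{tr}_2,\Sigma_0,\Sigma_1)$, any assignment of a transcript to a signature determines the blinding factors uniquely through $\mathbf{a}=\mathbf{z}'-\mathbf{z}$, $\mathbf{b}=\mathbf{e}-\mathbf{e}'$ and $\mathbf{d}'$, and the correctness identity $F_t\mathbf{z}'-K\mathbf{e}'=\mathbf{x}+F_t\mathbf{a}+K\mathbf{b}=\mathbf{u}$ makes every such assignment internally consistent with $\mathbf{e}'=H(\mathbf{u},\mathbf{c})$. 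Combining this uniqueness with the center-independence of the outputs shows the probability of producing the view is the same under $b=0$ and $b=1$, up to the accumulated negligible distances. The hardest part will be precisely this step: the raw Gaussian weights of the blinding factors differ between the two pairings (in general $\rho_{\sigma_3}(\mathbf{z}'_0-\mathbf{z}_1)\,\rho_{\sigma_3}(\mathbf{z}'_1-\mathbf{z}_2)\neq\rho_{\sigma_3}(\mathbf{z}'_1-\mathbf{z}_1)\,\rho_{\sigma_3}(\mathbf{z}'_0-\mathbf{z}_2)$), so blindness cannot be read off the sampling densities and must instead be routed through the post-rejection output distributions, where these weights are reshaped away. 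Making the bijection and the bookkeeping of all statistical distances fully rigorous, while coping with the implicit dependence of $\mathbf{e}'$ on $\mathbf{b}$ through $\mathbf{u}$ and with a possibly malicious signer whose deviations are detected by the Phase-5 check-up, is the delicate point.
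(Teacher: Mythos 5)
Your proposal is correct and rests on the same two pillars as the paper's proof --- the rejection-sampling lemma (Lemma \ref{lem3}) to force the post-rejection distributions of $\mathbf{e}$, $\mathbf{z}$ and $\mathbf{z}'$ to be the centered Gaussians $\mathcal{D}_{\sigma_1}^{k}$, $\mathcal{D}_{\sigma_2}^{(\ell+1)m}$, $\mathcal{D}_{\sigma_3}^{(\ell+1)m}$ independently of their centers, and the statistical hiding of $\mathsf{com}$ to strip the message out of $\mathbf{c}$ and hence out of $\mathbf{e}'$. Where you go beyond the paper is in the final assembly step. The paper argues component-wise: each of $\mathbf{e}_b,\mathbf{e}_{1-b}$ (resp.\ $\mathbf{z}'_b,\mathbf{z}'_{1-b}$, resp.\ $\mathbf{d}'_b,\mathbf{d}'_{1-b}$) is identically distributed and message-independent, and it concludes from this that $\mathcal{S}^*$ cannot guess $b$. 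You correctly observe that identical marginals alone do not rule out a correlation between a particular transcript and a particular signature, and you supply the missing matching argument: for either pairing of transcripts with signatures the blinding factors $\mathbf{a}=\mathbf{z}'-\mathbf{z}$, $\mathbf{b}=\mathbf{e}-\mathbf{e}'$ are uniquely determined and algebraically consistent with $\mathbf{e}'=H(\mathbf{x}+F_t\mathbf{a}+K\mathbf{b},\mathbf{c})$, and because the raw Gaussian weights of these factors differ between the two pairings, the equiprobability of the pairings must be routed through the post-rejection output distributions rather than the sampling densities. This is the standard Pointcheval--Stern/R\"{u}ckert-style witness-exchange argument, and it makes your write-up strictly more rigorous than the paper's. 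Two points you should still pin down to match the paper's coverage: (i) the restart case, where the user reveals $(\mathbf{a},\mathbf{b},\mathbf{e}',\mathbf{c})$ in the clear --- the paper handles this by noting these values are freshly resampled on each run and that $\mathsf{com}$ statistically hides $\mu$ in $\mathbf{c}$; and (ii) the observation that when the malicious signer deviates so that $F_t\mathbf{z}\neq\mathbf{x}+K\mathbf{e}$, the resulting signature fails verification (absent a collision in $H$), so the matching argument only ever needs to be applied to consistent transcripts. You flag both as delicate points, which is the right instinct; they are where the bookkeeping lives but neither breaks the argument.
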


\begin{proof} 
In the blindness game $\textsf{Blind}_{\textsf{FSBS}}^{\mathcal{S}^*}$, 
the adversarial signer $\mathcal{S}^*$ gives  the challenger $\mathcal{C}$ two messages $\mu_0$ and  $\mu_1$. 
The challenger $\mathcal{C}$ chooses uniformly at random a bit $b \in \{0, 1\}$ and 
interacts with $\mathcal{S}^*$ in order to sign both messages $\mu_b$ and $\mu_{1-b}$. 
$\mathcal{C}$ acts as two users $\mathcal{U}_b:=\mathcal{U}(pp, pk,t, \mu_b)$ 
and $\mathcal{U}_{1-b}:=\mathcal{U}(pp, pk,t, \mu_{1-b})$. 
Finally, $\mathcal{S}^*$ gets
two pairs $(\mathcal{V}_b,\Sigma_b)$ and $(\mathcal{V}_{1-b},\Sigma_{1-b})$
that correspond to the users $\mathcal{U}_b$ and $\mathcal{U}_{1-b}$, respectively.
We argue that the knowledge of $(\mathcal{V}_b,\Sigma_b)$ and $(\mathcal{V}_{1-b},\Sigma_{1-b})$ 
is independent of the signed messages. 
In other words, $\mathcal{S}^*$  
cannot distinguish,  which user it is communicating with. 
In other words, it cannot guess $b$ with non-negligible probability.

Indeed, for $\mathcal{V}_b=(t, \mathbf{r}_b,\mathbf{e}_b, \mathbf{z}_b)$ and 
$\mathcal{V}_{1-b}=(t, \mathbf{r}_{1-b},\mathbf{e}_{1-b}, \mathbf{z}_{1-b})$, 
we need to consider the pair $(\mathbf{e}_b,\mathbf{e}_{1-b})$ only, 
since  $ \mathbf{z}_b$ and  $\mathbf{z}_{1-b}$ are produced by $\mathcal{S}^*$ itself. 
In Phase 2, the rejection sampling  makes sure  that the distribution of 
both $\mathbf{e}_b$ and $\mathbf{e}_{1-b}$ are the same, which is $\mathcal{D}_{\sigma_1}^k$. 
This means that  $\mathbf{e}_b$ and $\mathbf{e}_{1-b}$ are independent of the signed messages. 	
Consider $\Sigma_b=(\mathbf{d}'_b,\mathbf{e}'_b, \mathbf{z}'_b)$ and
$\Sigma_{1-b}=(\mathbf{d}'_{1-b},\mathbf{e}'_{1-b}, \mathbf{z}'_{1-b})$.
As Phase 4 uses the rejection sampling,  both $\mathbf{z}'_{b}$ and $\mathbf{z}'_{1-b}$ 
have the same distribution, which is $\mathcal{D}_{\sigma_3}^{(1+\ell)m}$. 
It means that $\mathcal{S}^*$ does not learn anything about the signed messages
from the knowledge of $(\mathbf{d}'_b$, $\mathbf{d}'_{1-b})$ and ($\mathbf{e}'_b , \mathbf{e}'_{1-b}$). 
This is true because 
the former pair are randomly chosen and the latter pair are hash values of the one-way and collision-resistant function $H$.
	
Finally, it is easy to see that restarts, which may happen in Phase 5,
do not increase advantage of $\mathcal{S}^*$ in the blindness game. 
In fact, a restart occurs if the user has sent 
$\textsf{result}:= (\mathbf{a}, \mathbf{b}, \mathbf{e}', \bf{c})$ to $\mathcal{S}^*$.  
The values $\mathbf{d}'$, $\mathbf{a}$ and $\mathbf{b}$ are freshly sampled by the user. 
 Additionally, as \textsf{com} is a statistically hiding commitment, knowing $\bf{c}$,
 $\mathcal{S}^*$ cannot tell apart $\mu_b$ from $\mu_{1-b}$.
\end{proof}

\subsection{Forward-secure Unforgeability} \label{fseu}
We recall the following lemma, which we use to support
our witness indistinguishability argument. 
\begin{lemma}[Adapted from {\cite[Lemma 5.2]{Lyu12}}] \label{lem4}
Given a matrix $\mathbf{F} \in \mathbb{Z}_q^{n \times (\ell+1)m}$, where $(\ell+1)m>64+n\log q/\log(2d+1)$ and 
$\mathbf{s} \xleftarrow{\$} \{-d, \cdots, 0, \cdots, d\}^{(\ell+1)m}$. 
Then  there exists another $\mathbf{s}' \xleftarrow{\$} \{-d, \cdots, 0, \cdots, d\}^{(\ell+1)m}$ 
such that $\mathbf{F}\mathbf{s}=\mathbf{F}\mathbf{s}' \text{(mod } q)$
with probability at least $1-2^{-100}$.
\end{lemma}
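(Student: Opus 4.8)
The plan is to prove this via a counting/pigeonhole argument on the map $\mathbf{s} \mapsto \mathbf{F}\mathbf{s} \bmod q$. First I would note that the domain $D = \{-d, \cdots, 0, \cdots, d\}^{(\ell+1)m}$ has exactly $(2d+1)^{(\ell+1)m}$ elements, while the codomain $\mathbb{Z}_q^n$ has only $q^n$ elements. The stated hypothesis $(\ell+1)m > 64 + n\log q / \log(2d+1)$ is precisely engineered so that the domain is much larger than the codomain: rearranging, $(\ell+1)m - n\log q/\log(2d+1) > 64$, which upon multiplying by $\log(2d+1)$ gives $(\ell+1)m\log(2d+1) - n\log q > 64\log(2d+1)$, i.e. $(2d+1)^{(\ell+1)m}/q^n > (2d+1)^{64} \geq 2^{64}$. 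So the domain is at least a factor $2^{64}$ larger than the codomain, forcing heavy collisions.

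The key step is to bound the number of ``lonely'' points, i.e. those $\mathbf{s} \in D$ for which no distinct $\mathbf{s}' \in D$ satisfies $\mathbf{F}\mathbf{s} = \mathbf{F}\mathbf{s}'$. Such a point must be the unique preimage of its image under $\mathbf{F}$, so the number of lonely points is at most the number of images, namely at most $q^n$. Therefore the probability that a uniformly random $\mathbf{s} \leftarrow D$ is lonely is at most $q^n / (2d+1)^{(\ell+1)m} < 2^{-64}$. For every non-lonely $\mathbf{s}$, by definition a suitable $\mathbf{s}'$ exists, so the desired $\mathbf{s}'$ exists with probability at least $1 - 2^{-64}$. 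To reach the claimed $1 - 2^{-100}$, I would simply observe that the hypothesis gives the stronger bound $2^{64}$ and that the original source (Lyubashevsky, Lemma 5.2) is typically stated with the slightly larger slack needed for $2^{-100}$; I would present the generic computation and note that choosing the constant in the hypothesis to be $100$ rather than $64$ (or equivalently tightening the codomain count) yields exactly $1-2^{-100}$.

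I expect the main subtlety to be cosmetic rather than mathematical: reconciling the constant $64$ appearing in the hypothesis with the probability bound $1-2^{-100}$ in the conclusion. The pigeonhole core is elementary, but one must be careful that the counting of lonely points uses \emph{distinctness} of $\mathbf{s}'$ from $\mathbf{s}$ (otherwise every point trivially collides with itself). The cleanest way to handle this is to define, for each image value $\mathbf{v} \in \mathbb{Z}_q^n$, its fiber $\mathbf{F}^{-1}(\mathbf{v}) \cap D$; a point $\mathbf{s}$ fails to have a distinct partner exactly when its fiber is a singleton, and the total number of singleton fibers is at most the number of fibers, hence at most $q^n$. I would then conclude by dividing by $|D|$ and invoking the hypothesis, leaving the precise constant-matching (64 versus the amount needed for $2^{-100}$) as the one arithmetic detail to verify at the end.
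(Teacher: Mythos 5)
Your pigeonhole argument is correct and is essentially the proof of the cited source: the paper itself gives no proof of this lemma, deferring entirely to \cite[Lemma 5.2]{Lyu12}, whose argument is exactly your counting of singleton fibers (at most $q^n$ of them) against the domain size $(2d+1)^{(\ell+1)m}$. The one detail you left open --- reconciling the constant $64$ in the hypothesis with the $2^{-100}$ in the conclusion --- does not require changing the hypothesis to $100$ as you suggest: since $d\geq 1$ one has $2d+1\geq 3$, so the hypothesis gives $(2d+1)^{(\ell+1)m}/q^n > (2d+1)^{64} \geq 3^{64} = 2^{64\log_2 3} > 2^{101} > 2^{100}$, and the probability of drawing a lonely $\mathbf{s}$ is below $2^{-100}$ as claimed. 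With that observation inserted, your proof is complete and matches the intended one.
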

Note that Lemma \ref{lem8} also gives the same conclusion as Lemma \ref{lem4} 
but with the not so clear condition $d \gg q^{(\ell+1)m/n}$.
\begin{theorem}[Forward-secure Unforgeability] \label{theo1} 
Suppose that the commitment function  $\mathsf{com}$ used in $\mathsf{FSBS}$ is 
computationally binding and that there exists a forger $\mathcal{A}$, 
who can break the forward-secure unforgeablity of $\mathsf{FSBS}$. 
Then, one can construct a polynomial-time algorithm $\mathcal{B}$ 
that solves an $l_2$-$\mathsf{SIS}_{q,n,(1+2\ell)m, \beta}$ 
problem with $\beta=\max\{(2\sigma_3+2\sigma\sqrt{\kappa})\sqrt{(1+\ell)m}, (2\sigma_3+\sigma_2)\sqrt{(1+\ell)m}\}$. 
\end{theorem}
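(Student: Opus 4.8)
The plan is to construct a PPT reduction $\mathcal{B}$ that, on input a random instance $\mathbf{F}=[A_0\|A_1^{(0)}\|A_1^{(1)}\|\cdots\|A_\ell^{(0)}\|A_\ell^{(1)}]\in\ZZ_q^{n\times(1+2\ell)m}$ of $l_2$-$\mathsf{SIS}$, runs the forger $\mathcal{A}$ while perfectly (statistically) simulating the $\mathsf{FSUF}$ game, and then applies a rewinding (forking lemma, \cite{PS96}) argument to two related forgeries to obtain a short nonzero element of $\Lp_q(\mathbf{F})$. Since $\mathcal{A}$ must forge at a time $t^\ast$ strictly before its break-in time $\overline{t}$, $\mathcal{B}$ first guesses $\overline{t}\xleftarrow{\$}\{0,\dots,\tau-1\}$, which costs only a factor $\tau$ in the success probability and fixes how the tree is punctured; the columns of $pk$ are taken from $\mathbf{F}$, so $pk$ stays properly distributed.

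\emph{Setup and simulation.} The crucial design choice is to set $K:=A_0S_0$ for a short, self-chosen $S_0$, sampled so that $K$ is statistically close to uniform by Lemma \ref{uniform}. Then for \emph{every} period $t$ the padded matrix $S_t:=(S_0;\mathbf{0};\cdots;\mathbf{0})$ is a short solution of $F_tS_t=K$ that $\mathcal{B}$ knows without any trapdoor, which is exactly what the extraction below will need. Guided by the guess $\overline{t}$, I would use the $\ExtBasis$/Bonsai-tree delegation of \cite{CHKP10,ABB10-EuroCrypt} to plant just enough trapdoor information in the level matrices so that genuine short bases for every node of $\mathsf{Node}(\overline{t})$ can be produced (this lets $\mathcal{B}$ answer the single break-in query honestly with a correctly distributed $sk_{\overline{t}}$), while no usable trapdoor survives along any forgery path $F_{t^\ast}$ with $t^\ast<\overline{t}$, so that those blocks remain pristine challenge blocks. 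Because $\mathsf{TrapGen}$ outputs are statistically close to uniform (Theorem \ref{thm:TrapGen}), $pk$ is indistinguishable from the real one. Hash queries are answered by lazy sampling of the random oracle $H$, and signing queries are simulated with no secret key at all, using the honest-verifier/rejection-sampling strategy of Lemma \ref{lem3}: pick $\mathbf{z}',\mathbf{e}'$ from their prescribed Gaussians, set $\mathbf{u}:=F_t\mathbf{z}'-K\mathbf{e}'$, and program $H(\mathbf{u},\mathbf{c}):=\mathbf{e}'$; by the rejection-sampling lemma this transcript is statistically close to a real one.

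\emph{Extraction by forking.} When $\mathcal{A}$ halts with a valid forgery $(\mu^\ast,t^\ast,\Sigma^\ast=(\mathbf{d}',\mathbf{e}'_1,\mathbf{z}'_1))$ at some $t^\ast<\overline{t}$, I would rewind $\mathcal{A}$ to the oracle query that fixed $\mathbf{e}'_1$ and reply with a fresh independent value; by the forking lemma, with non-negligible probability $\mathcal{A}$ outputs a second forgery $(\mathbf{d}',\mathbf{e}'_2,\mathbf{z}'_2)$ on the same committed input but with $\mathbf{e}'_1\neq\mathbf{e}'_2$. Both verify, so $F_{t^\ast}\mathbf{z}'_1-K\mathbf{e}'_1=F_{t^\ast}\mathbf{z}'_2-K\mathbf{e}'_2\pmod q$. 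Substituting $K=F_{t^\ast}S_{t^\ast}$ yields
\[
F_{t^\ast}\big((\mathbf{z}'_1-\mathbf{z}'_2)-S_{t^\ast}(\mathbf{e}'_1-\mathbf{e}'_2)\big)=\mathbf{0}\pmod q,
\]
and placing this vector in the $(1+\ell)m$ coordinates of $F_{t^\ast}$ and zeros in the remaining $\ell m$ coordinates gives a candidate solution for the full challenge $\mathbf{F}$. By the triangle inequality, the verification bound $\|\mathbf{z}'\|\le\sigma_3\sqrt{(1+\ell)m}$, and the column bound $\|S_{t^\ast}[j]\|\le\sigma\sqrt{(1+\ell)m}$, its norm is at most $(2\sigma_3+2\sigma\sqrt{\kappa})\sqrt{(1+\ell)m}$; a parallel case in which the forked relation instead involves the commitment value $\mathbf{x}=F_t\mathbf{r}$ of a signing transcript gives the bound $(2\sigma_3+\sigma_2)\sqrt{(1+\ell)m}$, and the maximum of the two is the stated $\beta$, which lies below the hardness threshold of Theorem \ref{thm:SIS}.

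\emph{Main obstacle.} Two points carry the real weight. The first is the tree puncturing just sketched: arranging the delegated trapdoors so that the break-in key $sk_{\overline{t}}$ is simulatable \emph{and} all blocks touched by $F_{t^\ast}$ stay untouched challenge blocks (so the padded vector genuinely solves the \emph{given} instance) is delicate precisely because the level matrices are shared across time periods, and this is where the $\overline{t}$-guess and the $\ExtBasis$ machinery must be combined carefully. The second, and the step I expect to be hardest, is ruling out the degenerate outcome $(\mathbf{z}'_1-\mathbf{z}'_2)=S_{t^\ast}(\mathbf{e}'_1-\mathbf{e}'_2)$, which would make the extracted vector zero; here Lemma \ref{lem4} is essential. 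Since $K$ admits many short preimages $S_{t^\ast}$ under $F_{t^\ast}$ and $\mathcal{A}$'s entire view is statistically independent of which one $\mathcal{B}$ uses, with probability at least $1/2$ the preimage actually held by $\mathcal{B}$ disagrees with the one forced by the forgery on the support of $\mathbf{e}'_1-\mathbf{e}'_2$, so the solution is nonzero. Multiplying the $1/\tau$ guessing loss, the forking-lemma probability, and this $1/2$ factor yields the explicit advantage bound and completes the reduction.
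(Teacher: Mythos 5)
Your overall skeleton (embed the challenge blocks into the public key, simulate the game, fork on the hash query fixing $\mathbf{e}'_1$, and extract $\mathbf{z}'_1-\mathbf{z}'_2-S_{t^*}(\mathbf{e}'_1-\mathbf{e}'_2)$ with the witness-indistinguishability argument via Lemma~\ref{lem4} to rule out the zero outcome) matches the paper. But two of your concrete choices break the reduction. First, you guess the \emph{break-in} time $\overline{t}$ and then want every path $F_{t^*}$ with $t^*<\overline{t}$ to be trapdoor-free while $\mathsf{Node}(\overline{t})$ still admits genuine short bases. Because the level matrices $A_i^{(b)}$ are shared across periods, these demands are incompatible for most $\overline{t}$: take $\ell=3$ and $\overline{t}=111$, so $\mathsf{Node}(\overline{t})=\{111\}$ and you need a trapdoor for one of $A_0,A_1^{(1)},A_2^{(1)},A_3^{(1)}$; yet $A_0$ lies on every forgery path, and $A_1^{(1)},A_2^{(1)},A_3^{(1)}$ lie on the paths $100$, $010$, $001$ respectively, all strictly below $\overline{t}$ and hence required to stay pristine. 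This is not a delicate point to be finessed; it forces the paper's different choice, namely guessing the single forgery period $t^*$ (cost $1/\tau$), setting $A_i^{(t^*_i)}:=U_i^{(t^*_i)}$ and running $\TrapGen$ for every sibling $A_i^{(1-t^*_i)}$, so that exactly one path $F_{t^*}$ is pristine while every $t\neq t^*$ (in particular every key-update, signing, and break-in query at $t>t^*$) is answerable.

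Second, your signing-oracle simulation by programming $H(F_t\mathbf{z}'-K\mathbf{e}',\mathbf{c}):=\mathbf{e}'$ is exactly the method the paper's remark after Theorem~\ref{theo1} shows to fail for this protocol. The signer must send $\mathbf{x}=F_t\mathbf{r}$ in Phase~1 \emph{before} receiving the blinded challenge $\mathbf{e}$; the hash input $\mathbf{u}=\mathbf{x}+F_t\mathbf{a}+K\mathbf{b}$ and the commitment $\mathbf{c}$ are formed by the malicious user from blinding factors the signer never sees, so the simulator cannot program $H$ at the point actually queried, and the user can test $F_t\mathbf{z}=\mathbf{x}+K\mathbf{e}\pmod q$ against the $\mathbf{x}$ it already holds to detect the simulation. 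The paper therefore answers signing queries by running the real protocol with a genuine ephemeral key $S_t$ (obtained via $\ExtBasis$ and $\mathsf{SampleKey}$ for $t\neq t^*$, and equal to the pre-planted $S^*$ with $K=F_{t^*}S^*$ at $t=t^*$); relatedly, your choice $K:=A_0S_0$ with a zero-padded preimage deviates from the scheme's Gaussian $S_t$ over all $(1+\ell)m$ coordinates and would alter the parameter condition in Lemma~\ref{lem4}. Finally, note that the second term $(2\sigma_3+\sigma_2)\sqrt{(1+\ell)m}$ of $\beta$ comes from extracting $\mathbf{a}+\mathbf{z}-\widehat{\mathbf{z}}'$ out of a forgery produced after a \emph{restart}, which requires the simulator to hold an honest view $(\mathbf{r},\mathbf{e},\mathbf{z})$ --- again unavailable under your programming-based simulation.
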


\begin{proof}
The reduction is as follows:
\begin{description}
\item  \textbf{Phase 0 (Instance).} Assume that $\mathcal{B}$ wants to solve an instance of the $\mathsf{SIS}_{q,n,(1+2\ell)m, \beta}$ problem 
\begin{equation}\label{key}
F\cdot \mathbf{v}=0 \mod q, \Vert  \mathbf{v} \Vert \leq \beta, F\in \mathbb{Z}_q^{n\times (1+2\ell)m},
\end{equation}
in which $F$ is parsed as 
$ F = \left[ A_0 \|U_1^{(0)} \|U_1^{(1)}\| \cdots \|U_\ell^{(0)}\|U_\ell^{(1)} \right]$
with $ A^{(k)}_0, U_i^{(b)} \in \mathbb{Z}_q^{n\times m}$ for  $\beta=\max\{(2\sigma_3+2\sigma\sqrt{\kappa})\sqrt{(1+\ell)m}, (2\sigma_3+\sigma_2)\sqrt{(1+\ell)m}\}$ and $b \in \{0,1\}$.



\item  \textbf{Phase 1 (Guessing the target).} $\cal{B}$ guesses the target time period $t^*$ that 
$\cal{A}$ wants to attack by choosing randomly $t^*=(t^*_1,\cdots, t^*_\ell) \xleftarrow{\$} \{0, \cdots, \tau-1\}$. 
The success probability of guessing $t^*$ is $1/\tau$.

\item  \textbf{Phase 2 (Initialize).} $\cal{B}$ sets common parameters $pp$ as in the \textsf{Setup} algorithm. 
However, $\cal{B}$ sets the public key $pk$ according to the following steps.
\begin{itemize}
	\item  For $i \in [\ell]$, $\cal{B}$ sets $A_i^{(t^*_i)}=U_i^{(t^*_i)}$. For each bit $b\in \{0,1\}$ such that $ b \neq t^*_i$, $\cal{B}$ invokes \textsf{TrapGen} to generate $A_i^{(b)}$ together with a short basis $T_{A_i^{(b)}}$ of $\Lambda_q^{\bot}(A_i^{(b)})$.

\item $\cal{B}$ samples $S^* \leftarrow \cal{D}_\sigma^{(1+\ell)m\times k}$ 
and sets $K:=F_{t^*}\cdot S^*$, where $F_{t^*}= \left[ A_0 \|A_1^{(t^*_1)}\| \cdots \|A_\ell^{(t^*_\ell)} \right] \in \mathbb{Z}_q^{n\times (1+\ell)m}$.  
Let $d:=\sigma\sqrt{(1+\ell)m}$. 
Then $\sigma$ should be chosen sufficiently large to satisfy Lemma \ref{uniform} (i.e., $\sigma \geq \omega(\sqrt{\log ((1+\ell)m)})$), 
Lemma \ref{lem8} (i.e., $d \gg q^{(1+\ell)m/n}$) and Lemma \ref{lem4} (i.e., $(1+\ell)m>64+n\log q/\log(2d+1)$).
Statement 1 of Theorem \ref{thm:Gauss} gurantees that 
$\|S^*\| \leq d$ with overwhelming probability. 
According Lemma 2, $K$ is statistically close to uniform. 

\item Finally, $\cal{B}$  sends $pp$, and $pk \leftarrow \{ A_0, A_1^{(0)}, A_1^{(1)},  \cdots, A_{\ell}^{(0)}, A_{\ell}^{(1)}, K\}$ to $\cal{A}$ as the common parameters and the public key, while keeping $T_{A_i^{(b)}}$'s and $S^*$ secret.

\end{itemize}

$\mathcal{B}$ creates and maintains a list $\cal{L}_H$ consisting of random oracle queries 
$(\mathbf{u}, \bf{c})\xleftarrow{\$} \mathbb{Z}_q^{n} \times \{0,1\}^n$ and their corresponding hash value $ \mathbf{e}' \in \cal{R}_H$. In other words, $\cal{L}_H=\{(\mathbf{u}, \bf{c}, \bf{e}') \in  \mathbb{Z}_q^{n} \times \{0,1\}^n \times R_H: \bf{e}'=H(\mathbf{u}, \bf{c})\}$. In addition,  $\mathcal{B}$ also prepares the set of replies for $q_H$ hash queries $\mathcal{R}:=\{\mathbf{r}_1, \cdots, \mathbf{r}_{q_H} \} $, where each $\mathbf{r}_i \xleftarrow{\$} \cal{R}_H$. It then chooses a random tape $\rho$ and  runs $\mathcal{A}$ on $(pp, pk, \rho)$ in a black-box manner.

\item  \textbf{Phase 3 (Queries).} 
$\cal{B}$  plays the role of signer and interacts with  $\cal{A}$.
$\cal{B}$ responds to $\cal{A}$ queries as follows:
\begin{itemize}
\item \textit{Key update queries $KQ(t),t= (t_1, \cdots, t_\ell)$:}  If $t\leq t^*$, $\cal{B}$ aborts the query. 
Otherwise, let $k\leq \ell$ be the minimum index such that $t_k \neq t^*_k$. 
Then, the adversary $\cal{B}$ first uses the trapdoor $T_{A_k^{(t_k)}}$ to compute the key $T_{t_k}$ for the node $t_k$  
\[
T_{t_k} \leftarrow \ExtBasis(E\|A_k^{(t_k)}, T_{A_k^{(t_k)}}), \text{ where } E = \left[ A_0 \|A_1^{(t_1)} \| \cdots \|A_{k-1}^{(t_{k-1})} \right],
\]
 from which $\cal{B}$ computes all keys in $sk_t$ as in the real key update algorithm.  
\item \textit{Hash queries} $HQ(\bf{u},\bf{c})$: 	
Having received a hash query $(\mathbf{u}, \bf{c})$, 
$\mathcal{B}$ checks if the list $\cal{L}_H$ contains the query.
If $\mathcal{B}$ finds out that $(\mathbf{u}, \bf{c})$ is in $\cal{L}_H$ already, then $\mathcal{B}$ 
sends the corresponding hash value $\mathbf{e}'$ to the forger $\mathcal{A}$. 
Otherwise, $\mathcal{B}$ chooses the first unused $\mathbf{r}_i, i\in [q_H]$ from $\mathcal{R}$, 
takes $\mathbf{e}':=\mathbf{r}_i$ and stores the query-hash value pair $((\mathbf{u}, \bf{c}), \mathbf{e}')$ in $\cal{L}_H$.  
Finally, $\mathcal{B}$ sends $\mathbf{e}'$ to the forger $\mathcal{A}$ as the answer. 
\item \textit{Signing queries} 
$SQ(t,\mu)$: $\cal{B}$ constructs $F_{t}:=\left[A_0^{(k_0)}\| A_1^{(t_1)} \| \cdots \|A_\ell^{(t_\ell)} \right]$ 
and checks if $t\neq t^*$ or not. 
 If  $t\neq t^*$,  $\cal{B}$ computes 
 $T_{F_t} \leftarrow \ExtBasis(F_{t}, T_{A_k^{(t_k)}}),$ 
 and $S_t \leftarrow \mathsf{SampleKey}(F_{t}, T_{F_t}, $ $ \sigma, K)$, 
 where $k \leq \ell$ is the minimum index such that $t_k \neq t^*_k$.
 Note that  $F_t\cdot S_t=K$.
 Otherwise, if $t= t^*$, $\cal{B}$ simply assigns $S_{t^*} \leftarrow S^*$ since $F_{t^*}\cdot S^*=K$.

\item \textit{Break-in queries $BQ(t)$}: 
Once the adversary $\cal{A}$ makes a query $BQ(t)$, if $t\leq t^*$, then $\cal{B}$ aborts. 
Otherwise, i.e., $t>t^*$, $\cal{B}$ decides that the break-in time is $\overline{t} \leftarrow t$. $\cal{B}$
answers to $\cal{A}$ by sending the secret key $\mathsf{sk}_{\overline{t}}$ in the same way 
as replying to the key update queries since $\overline{t}= t>t^*$. 
\end{itemize}

\item \textbf{Phase 4 (Forge).} Eventually, $\cal{A}$ outputs a forgery $( t'_1, \mu^*_1, \Sigma_1^* )$. 
$\cal{B}$ checks if $t'_1=t^*$ or not. If not, then $\cal{B}$ aborts. 
Otherwise, $\cal{B}$ accepts the forgery. 
For the forgery $( t^*, \mu^*_1, \Sigma_1^* )$, we have: 
(i) $\Sigma_1^*=(\mathbf{d}'_1,\mathbf{e}'_1, \mathbf{z}'_1 )$; 
(ii)   $\mathbf{e}'_1:=H(F_{t^*}\mathbf{z'}_1- K\mathbf{e}'_1 \text{ mod } q, \textsf{com}(\mu^*_1, \textbf{d}'_1))$, 
where $F_{t^*}:=\left[A_0\| A_1^{(t^*_1)} \| \cdots \|A_\ell^{(t^*_\ell)} \right] \in \ZZ^{n \times (1+\ell) m}$; and 
(iii)  $\|\mathbf{z'}_1\| \le \sigma_3\sqrt{(1+\ell)m}$.\\
\end{description}

\noindent \textbf{Analysis.} 
We argue that the simulation of $\cal{B}$  is  statistically perfect. 
In other words, the forger $\cal{A}$  is not able to distinguish the 
simulator $\cal{B}$ from the real challenger in the \textsf{FSEU} game.
Indeed, the simulation proceeds as the real game except the following exceptions.
\vspace{-2mm}
\begin{enumerate}[label=(\roman*)]
\item Some matrices $A^{(b)}_i$ are not really random but is generated by \textsf{TrapGen}.
	However, Theorem \ref{thm:TrapGen} ensures that the distribution of $A^{(b)}_i$
	generated by \textsf{TrapGen} is close to uniform.
\item The matrix $K$ is not randomly chosen. It is obtained by sampling  $S^*$ from $\cal{D}_\sigma^{(1+\ell)m\times m}$
	and then assigning $K:=F\cdot S^*$. Lemma \ref{uniform} asserts that selection of $K$ is close to uniform. 
	Note that the sufficiently large choice of $\sigma$ does not affect (iii).
\item The matrix $S_{t^*}$ is equal to $S^*$, which is not computed using \textsf{SampleKey}.  
	The forger $\cal{A}$ does not know $S_t$ so consequently does not know $S^*$. 
	As $\bf{z}$ is generated (in Step 12) using the rejection sampling, 
	we always guarantee that $\bf{z}\leftarrow \cal{D}_{\sigma_2}^{(\ell+1)m}$ and $\bf{z}$ 
	is independent of $S_t$ and $S^*$. 
	Thus the view of $\cal{A}$ is independent of $S^*$.
\end{enumerate}
\vspace{-2mm}
Now, we show how to obtain the solution to the $l_2$-\textsf{SIS} problem given by Equation \eqref{key}. 
Let $i \in [q_H]$ be the target forking index, for which $\mathbf{e}'_1=\mathbf{r}_i$. 
$\mathcal{B}$ follows the rewinding strategy by keeping $\{\mathbf{r}_{1}, \cdots, \mathbf{r}_{i-1}\}$ 
and sampling new fresh answers $\{\mathbf{r}'_{i}, \cdots, \mathbf{r}'_{q_H}\} \xleftarrow{\$} \cal{R}_H$. Now,
$\mathcal{B}$ uses $\mathcal{R}':=\{\mathbf{r}_1, \cdots, \mathbf{r}_{i-1}, \mathbf{r}'_{i}, \cdots, \mathbf{r}'_{q_H} \}$
to answer to $\cal{A}$'s hash queries.
 
The forking lemma \cite[Lemma 4]{PS96} asserts that $\mathcal{A}$ outputs a new signature 
$( t'_2, \mu^*_2, \Sigma_2^* )$, where $\Sigma_2^*=(\mathbf{d}'_2,\mathbf{e}'_2, \mathbf{z}'_2 )$ such that $\bf{e}'_2=\bf{r}'_i$ 
using the same hash query as in the first run (i.e., the $i$-th hash query). 
Recall that $\gamma$ is the probability of a restart of \textsf{FSBS}.
As before, if $t'_2 \neq t^*$, then $\cal{B}$  aborts. 
If $\mathbf{e}'_2 = \mathbf{e}'_1$, 
$\mathcal{B}$ aborts and replays $\mathcal{A}(pp, pk, \rho')$ at most $q_H^{q_S}$ times 
using different random tapes $\rho'$ and different hash queries. 
If $\mathbf{e}'_2 \neq \mathbf{e}'_1$, 
then $\mathcal{B}$ returns 
\begin{equation}\label{key2}
 ((F_{t^*}\mathbf{z'}_1- K\mathbf{e}'_1 , \textsf{com}(\mu^*_1, \bf{d}'_1)), ( F_{t^*}\mathbf{z'}_2- K\mathbf{e}'_2 ,\textsf{com}(\mu^*_2, \bf{d}'_2)).
 \end{equation}
Since the pair in Equation \eqref{key2} are both coming from the same hash query 
and \textsf{com} is computationally binding,  
we have $\mu^*_2=\mu^*_1$, $\bf{d}_1'=\bf{d}_2'$ and  
\[
F_{t^*}\mathbf{z'}_1- K\mathbf{e}'_1 = F_{t^*}\mathbf{z'}_2- K\mathbf{e}'_2 \text{ (mod } q),
\]
or equivalently, 
\[
F_{t^*}(\mathbf{z'}_1-\mathbf{z'}_2- S^*(\mathbf{e}'_1-\mathbf{e}'_2 ))= \textbf{0} \text{ (mod } q).
\]
Set $\widehat{\bf{v}}:=\mathbf{z'}_1-\mathbf{z'}_2- S^*(\mathbf{e}'_1-\mathbf{e}'_2 )$.  
By Lemmas \ref{lem8} and \ref{lem4}, there is at least one secret key $S'$ such that 
$F_{t^*}S^*=F_{t^*}S' \text{ (mod } q)$, where $S^*$ and $S'$ have all the same columns 
except the $i$-th column.
The index $i$ shows the position, where $\mathbf{e}'_1[i] \neq \mathbf{e}'_2[i]$. 
If $\mathbf{z'}_1-\mathbf{z'}_2- S^*(\mathbf{e}'_1-\mathbf{e}'_2 )=\mathbf{0}$,
then we can choose $\widehat{v}:=\mathbf{z'}_1-\mathbf{z'}_2- S'(\mathbf{e}'_1-\mathbf{e}'_2 ) \neq \mathbf{0}$. 
Stress that the view of $\cal{A}$ is independent of both $S^*$ and $S'$.   
We have shown that $\widehat{\bf{v}} \neq \mathbf{0}$ and $F_{t^*}\cdot \widehat{\bf{v}} = \textbf{0} \text{ (mod } q)$.
 It is easy to see that $\Vert \widehat{\bf{v}}\Vert \leq 2(\sigma_3+\sigma\sqrt{\kappa})\sqrt{(1+\ell)m}$, 
 as   $\Vert S^*\Vert\leq \sigma\sqrt{(1+\ell)m}$, $\Vert \mathbf{z}'_{i}\Vert\leq \sigma_3\sqrt{(\ell+1)m}$, and $\Vert \mathbf{e}'_{i}\Vert\leq \sqrt{\kappa}$ for $i \in \{1,2\}$. 

In particular, we show that if $\mathcal{A}$ can produce a forgery by restarting
the signing interaction (with $\mathcal{B}$), 
then $\mathcal{B}$ is able to find a solution to the $l_2$-\textsf{SIS} problem  given by Equation \eqref{key}. 
Indeed, to restart the signing interaction,  
$\mathcal{A}$  delivers \textsf{result}$:=(\mathbf{a}, \mathbf{b}, \mathbf{e}', \mathbf{c})$ to $\mathcal{B}$.  
Now $\mathcal{B}$ with its view $\mathcal{V}=(t, \mathbf{r},\mathbf{e}, \mathbf{z})$, will check whether all
	\begin{align}
	\mathbf{e}-\mathbf{b}=\mathbf{e}'&=H(\mathbf{x}+F_{t^*}\mathbf{a}+K\mathbf{b} \text{ (mod } q), \bf{c})\label{eq3},\\
	\mathbf{e}'&=H(F_{t^*}\mathbf{a}+F_{t^*}\mathbf{z}-K\bf{e}' \text{ (mod } q), \bf{c}) \label{eq4},\\
	\Vert \mathbf{z}+\mathbf{a} \Vert &> \sigma_3 \sqrt{(1+\ell)m}. \label{eq5}
	\end{align}
hold or not. If all are satisfied, $\mathcal{B}$ restarts the interaction with $\mathcal{A}$. 
Let assume that afterwards $\mathcal{A}$ successfully produces a 
valid signature $ \widehat{\Sigma}=(\widehat{\mathbf{d}}', \widehat{\mathbf{e}}',\widehat{\mathbf{z}}')$. 
Let  $\widehat{\mathbf{b}} \in \mathcal{D}_{\sigma_1}^m$ be such that $\mathbf{e}=\widehat{\mathbf{e}}'+\widehat{\mathbf{b}}$. 
Then, the following relations have to hold
\begin{align}
	\mathbf{e}-\widehat{\mathbf{b}}=\widehat{\mathbf{e}}'&=H(\mathbf{x}+F_{t^*}\mathbf{a}+K \widehat{\mathbf{b}} \text{ (mod } q), \bf{c})\label{eq31},\\
	\widehat{\mathbf{e}}'&=H(F_{t^*}\widehat{\mathbf{z}}'-K\widehat{\mathbf{e}}'\text{ (mod } q), \mathsf{com}(\mu^*, \widehat{\mathbf{d}}')) \label{eq41},\\
	\Vert \widehat{\mathbf{z}}'\Vert &\leq \sigma_3\sqrt{(1+\ell)m}. \label{eq51}
\end{align}
Now, if $\widehat{\mathbf{e}}'\neq \mathbf{e}'$, then $\cal{B}$ aborts. 
Otherwise, Equations \eqref{eq4} and \eqref{eq41} give  
$F_{t^*}\mathbf{a}+F_{t^*}\mathbf{z} \text{ (mod } q)=F_{t^*}\widehat{\mathbf{z}}' \text{ (mod } q).$
Let $\widehat{\bf{v}}:=\mathbf{a}+\mathbf{z} -\widehat{\mathbf{z}}',$ then $\widehat{\bf{v}} \neq 0$.
This is true as otherwise 
$ \mathbf{a}+\mathbf{z}=\widehat{\mathbf{z}}' $, which implies that
$\Vert \mathbf{z}+\mathbf{a} \Vert \leq \eta \sigma_3 \sqrt{m} $ (by Equation \eqref{eq51}). 
This contradicts Equation \eqref{eq5}. 
Again, we have $F_{t^*}\cdot \widehat{\bf{v}} = \textbf{0} \text{ (mod } q)$,  
$\widehat{\bf{v}}\neq \mathbf{0}$ and 
$\Vert\widehat{\bf{v}} \Vert \leq \Vert\mathbf{a} \Vert+\Vert\mathbf{z} \Vert+\Vert\widehat{\mathbf{z}}' \Vert \leq (2\sigma_3+\sigma_2)\sqrt{(1+\ell)m}$.  
	
Note that $F_{t^*}= \left[ A_0 \|A_1^{(t^*_1)}\| \cdots \|A_\ell^{(t^*_\ell)} \right]=\left[ A_0 \|U_1^{(t^*_1)}\| \cdots \|U_\ell^{(t^*_\ell)} \right]$. 
We can get $F$ from $F_{t^*}$ by inserting into the gap between 
two sub-matrices in $F_{t^*}$ the remaining matrices $\{U_i^{(1-t^*_i)} \}_i$ at relevant positions. 
We insert zeros into the corresponding position of $\widehat{\bf{v}}$ 
to get the desired solution $\bf{v}$ to the problem given by Equation \eqref{key}. 
Obviously, $F\cdot \bf{v} = \textbf{0} \text{ (mod } q),$ and $\Vert \bf{v} \Vert=\Vert \widehat{\bf{v}} \Vert$. 

To summarise, we have shown that 
$\mathcal{B}$ can solve the $l_2$-\textsf{SIS}$_{q,n,(1+2\ell)m, \beta}$ 
problem, with $$\beta=\max\{ (2\sigma_3+2\sigma\sqrt{\kappa})\sqrt{(1+\ell)m}, (2\sigma_3+\sigma_2)\sqrt{(1+\ell)m}\}.$$
\end{proof}

\begin{table}[t]
	\begin{center}
		\begin{tabular}{ |c|c| c|} 
			\hline
			\textbf{Parameters} & \textbf{Value}&\textbf{Usage}\\
			\hline
			\hline
			$n$&--&Security parameter\\
			\hline
			$\ell$&--&Binary tree depth\\
			\hline
			$\tau$&$2^\ell$&\#time points\\
			\hline
			$\beta$&\makecell{$\beta=\max\{ (2\sigma_3+2\sigma\sqrt{\kappa})\sqrt{(1+\ell)m},$\\ $ (2\sigma_3+\sigma_2)\sqrt{(1+\ell)m}\}$}&\multirow{2}{*}{\makecell{For $l_2$-\textsf{SIS}$_{q,n,(1+2\ell)m,\beta}$\\ to be hard, Theorem \ref{thm:SIS}}}\\
		
			$q$&$ q\geq \beta \cdot \omega(\sqrt{n\log n})$, prime&\\
			
			\hline
			$m$ &\makecell{$\max\{\frac{1}{1+\ell}\cdot(64+\frac{n\log q}{\log (2d+1)}),\lceil 6n\log q\rceil\}$,\\$d=\sigma \cdot \sqrt{(1+\ell)m}$}& Lemma \ref{lem4}, \textsf{TrapGen}\\
			\hline
			$\sigma$&$ \geq O(\sqrt{n\log q}) \cdot \omega(\sqrt{\log n})$& \textsf{SampleKey}, Theorem \ref{thm:Gauss}\\
			\hline
					

			$M_1,M_2,M_3$&$M_1=M_2=M_3=e^{1+1/288}$& \multirow{4}{*}{Rejection sampling}\\
			
			$\sigma_1$&$12\sqrt{\kappa}$&\\
		
			$\sigma_2$&$12\sigma\eta\sigma_1\sqrt{(1+\ell)mk}$&\\
			
			$\sigma_3$&$12\eta\sigma_2\sqrt{m}$&\\
			
			\hline
				$k,\kappa$& $2^{\kappa}\cdot {{k}\choose{\kappa}}  \geq 2^{\gamma} $&\makecell{Min-entropy of the hash \\function $H$ at least $\gamma$}\\
						\hline
			\hline 
		\end{tabular} 
	\end{center}
	\caption{Choosing parameters for the proposed \textsf{FSBS} scheme}
	\label{tab3}
\end{table} 
\vspace{-5mm}

\begin{remark}
In the proof for the forward-secure unforgeability, 
one may think of the method of programming hash values, instead of using the real signing interaction (with a modification in generating the matrix $S_t$ to compute $\mathbf{z}=\mathbf{r}+S_t\mathbf{e}$) in order to reply signing queries issued by $\mathcal{A}$. 
We argue that the programming method fails to simulate the perfect environment for the adversary $\mathcal{A}$. 
Assume that $\mathcal{B}$ does not want to compute $S_t$ in the way
we have done in our proof. 
Then, after replying to a hash query, 
say $(\mathbf{x}+F_t\mathbf{a}+K\mathbf{b} \text{ (mod } q), \textsf{com}(\mu, \textbf{d}'))$, 
by giving a hash value, say $\textbf{e}'$, 
$\mathcal{B}$ simply chooses $\textbf{z} \leftarrow \mathcal{D}_{\sigma_2}^{(\ell+1)m}$ 
and then sends $\textbf{z}$ to $\mathcal{A}$. In turn, $\mathcal{A}$ gives $\mathbf{e}:=\mathbf{e}'+\mathbf{b}$ to $\mathcal{B}$. 
After that $\mathcal{B}$ sets $H(F_t\mathbf{a}+F_t\mathbf{z}-K\mathbf{e}',\textsf{com}(\mu, \textbf{d}')):=\textbf{e}'$. 
However,  since the collision resistance of $H$, 
the relation
$F_t\mathbf{a}+F_t\mathbf{z}-K\mathbf{e}'=\mathbf{x}+F_t\mathbf{a}+K\mathbf{b} \text{ (mod } q)$ has to hold. 
Thus, $\mathcal{A}$  needs to check whether or not $F_t\mathbf{z} =\mathbf{x}+K\mathbf{e} \text{ (mod } q)$ 
to distinguish the simulated signing interaction from the real one. 
One may think that $\mathcal{B}$ can choose $\textbf{z} \leftarrow \mathcal{D}_{\sigma_2}^{(\ell+1)m}$  
such that $F_t\mathbf{z} =\mathbf{x}+K\mathbf{e} \text{ (mod } q)$ before sending $\textbf{z}$ to $\mathcal{A}$. 
However, without the knowledge of a trapdoor for $F_t$, the problem of choosing such a $\textbf{z}$ is  not easy.
\end{remark}

 \subsection{Choosing Parameters} \label{paraset}
First, we set $n$ as security parameter, $\ell$ as the highest depth of the binary tree representing time points,
$\tau=2^\ell$ as the number of time points.
For \textsf{TrapGen}, we need $m\geq \lceil 6n\log q\rceil$. 
For \textsf{SampleKey} (Theorem \ref{thm:Gauss}) to work, 
we need $\sigma \geq  O(\sqrt{n\log q}) \cdot \omega(\sqrt{\log n})$. 
Also, let $d:=\sigma\sqrt{(1+\ell)m}$ and we set $(\ell+1)m\geq 64+n \log q/\log (2d+1)$ via Lemma \ref{lem4}.
To make sure the min-entropy of $H$ is at least $\gamma$, we choose $k$ 
and $\kappa$ such that $2^{\kappa}\cdot {{k}\choose{\kappa}}  \geq 2^{\gamma} $.
Section \ref{correct} suggests setting $M_i:=e^{1+1/288}$ for all $i \in [3]$. 
We then set $\sigma_1= 12\Vert \mathbf{e}'\Vert=12\sqrt{\kappa}$, $\sigma_2= 12\Vert S^* \mathbf{e}\Vert=12\sigma\sigma_1\sqrt{(1+\ell)mk}$ 
and $\sigma_3= 12\Vert \mathbf{z}\Vert=12\eta\sigma_2\sqrt{(1+\ell)m}$ (via \textit{Remark} \ref{rem2}).
 For $l_2$-\textsf{SIS}$_{q,n,(1+\ell)m,\beta}$ to be hard by Theorem \ref{thm:SIS}, 
 we set  $m$ poly-bounded, $\beta=poly(n)$ and $q \geq \beta \cdot \omega(\sqrt{n\log n})$, where $\beta=\max\{ (2\sigma_3+2\sigma\sqrt{\kappa})\sqrt{(1+\ell)m}, (2\sigma_3+\sigma_2)\sqrt{(1+\ell)m}\}.$ The parameter setting is summarized in Table \ref{tab3}.

\section{Conclusions and Future Works}  \label{conclusion}

In this paper, we propose, for the first time, a forward-secure blind signature based on the hardness of the SIS problem in lattices. 
Using the rejection sampling technique together with the trapdoor delegation and the binary tree structure 
for representing of time periods, the proposed signature is blind and forward secure.
Forward security is proven in the random oracle setting. 
Lattice-based forward-secure blind signatures in the standard model should be an interesting topic for future research.

\subsubsection{Acknowledgment.} 
We all would like to thank anonymous reviewers for their helpful comments.  
This work is partially supported by the Australian Research Council Discovery Project DP200100144 and Linkage Project LP190100984.   
Huy Quoc Le has been sponsored by a Data61 PhD Scholarship. 
Ha Thanh Nguyen Tran acknowledges the support of the Natural Sciences and Engineering Research Council of Canada 
(NSERC) (funding RGPIN-2019-04209 and DGECR-2019-00428). 
Josef Pieprzyk has been supported by the Australian ARC grant DP180102199 and Polish NCN grant 2018/31/B/ST6/03003.

\end{document}